\newtheorem{theorem}{Theorem}[section]
\newtheorem{lemma}[theorem]{Lemma}
\newtheorem{corollary}[theorem]{Corollary}
\newtheorem{definition}[theorem]{Definition}
\newtheorem{remark}[theorem]{Remark}
\newtheorem{example}[theorem]{Example}
\newtheorem{construction}[theorem]{Construction}
\newcommand{\leqR}{\preceq}
\newcommand{\bb}{\mathsf{b}}
\newcommand{\ctx}{\Gamma}
\newcommand{\modal}[1]{\Box_{#1}}
\newcommand{\Nat}{\mathbb{N}}
\newcommand{\Val}{\mathsf{Val}}
\newcommand{\eval}{\Downarrow}
\title{Resource-Bounded Type Theory:\\Compositional Cost Analysis via Graded Modalities}
\author[1]{Mirco A.~Mannucci}
\author[2]{Corey Thuro}
\affil[1]{HoloMathics, LLC \authorcr \texttt{mirco@holomathics.com}}
\affil[2]{Department of Mathematics and Statistics, University of Maryland, Baltimore County \authorcr \texttt{cthuro1@umbc.edu}}
\date{November 2025}
\begin{document}
\maketitle

\begin{center}
\textbf{Keywords:} type theory, resource analysis, graded modalities, coeffects, presheaf semantics, cost soundness, categorical semantics

\medskip
\textbf{MSC 2020:} 03B38 (Type theory), 03G30 (Categorical logic, topoi), 68Q25 (Analysis of algorithms and problem complexity), 18B25 (Topoi)

\medskip
\textbf{ACM CCS:} Theory of computation $\to$ Type theory; Semantics and reasoning; Program analysis
\end{center}

\begin{abstract}
We present a compositional framework for certifying resource bounds in typed programs. Terms are typed with synthesized bounds $\bb$ drawn from an abstract resource lattice $(L, \preceq, \oplus, \sqcup, \bot)$, enabling uniform treatment of time, memory, gas, and domain-specific costs. We introduce a graded feasibility modality $\modal{r}$ with counit and monotonicity laws. Our main result is a syntactic cost soundness theorem for the recursion-free simply-typed fragment: if a closed term has synthesized bound $b$ under budget $r$, its operational cost is bounded by $b$. We provide a syntactic term model in the presheaf topos $\mathbf{Set}^L$---where resource bounds index a cost-stratified family of definable values---with cost extraction as a natural transformation. We prove canonical forms via reification and establish initiality of the syntactic model: it embeds uniquely into all resource-bounded models. A case study demonstrates compositional reasoning for binary search using Lean's native recursion with separate bound proofs.
\end{abstract}

\tableofcontents

\section{Introduction}

Formal methods typically certify \emph{what holds}, not \emph{what it costs}. In many settings---safety-critical controllers, on-device ML, blockchain contracts, and time-bounded autonomous agents---a proof without a resource budget is insufficient: \emph{a correct component that exceeds its resource envelope is unusable}. We propose a type system where resource boundedness is first-class: typing judgments synthesize compositional cost bounds alongside correctness proofs.

\paragraph{Goal (compositional resource bounds).} For a typed term $t$ (program/proof) and a resource budget $r \in L$, we synthesize a bound $\bb$ via typing rules such that:
\[
\ctx \vdash_{r;\,\bb} t : A
\]
The synthesized bound $\bb$ is checked against the budget ($\bb \preceq r$) and over-approximates operational cost. We prove this connection syntactically (Theorem~\ref{thm:soundness}) and validate it via a syntactic term model in $\mathbf{Set}^L$ (\S\ref{sec:semantics}). We establish that the syntactic model is initial: it embeds uniquely into all resource-bounded models (Theorem~\ref{thm:universal-property}).

\subsection{Motivating Examples}

\begin{description}[leftmargin=2em,labelsep=0.5em]
  \item[Time-critical systems:] Automotive braking must compute correct results within 10\,ms.
  \item[Memory-constrained devices:] Mobile inference must fit within 4\,GB.
  \item[Blockchain contracts:] Smart contracts must not exceed gas limits.
  \item[Energy budgets:] Embedded systems have strict power envelopes.
  \item[Composite constraints:] Real-time systems may require bounds on (time $\times$ memory $\times$ depth) as a single structured resource.
\end{description}

\subsection{Contributions}

\begin{enumerate}[leftmargin=2em]
  \item An \emph{abstract resource lattice} $(L, \preceq, \oplus, \sqcup, \bot)$ distinguishing sequential composition ($\oplus$) from worst-case branching ($\sqcup$), avoiding premature commitment to specific dimensions.
  \item A compositional calculus for resource-bounded terms over abstract lattices (\S\ref{sec:typing-rules}).
  \item Type soundness theorems: preservation, progress, and \textbf{cost soundness} for the recursion-free simply-typed fragment (\S\ref{sec:metatheory}).
  \item A syntactic term model in $\mathbf{Set}^L$ where types are presheaves and cost is a natural transformation (\S\ref{sec:semantics}).
  \item \textbf{Canonical forms} via reification (\S\ref{sec:reification}).
  \item \textbf{Universal property}: initiality of the syntactic model (\S\ref{sec:universal-property}).
  \item A graded modality $\modal{r}$ for feasibility certification (\S\ref{sec:box}).
  \item A case study (binary search) demonstrating compositional reasoning (\S\ref{sec:case-study}).
  \item Instantiation examples: time-only, multi-dimensional, gas budgets (\S\ref{sec:instantiations}).
  \item Engineering integration patterns for Lean 4 (\S\ref{sec:engineering}).
\end{enumerate}

\subsection{Scope of the Formalization}
\label{sec:scope}

The formal development (\S\ref{sec:typing-rules}--\S\ref{sec:semantics}) covers the \emph{recursion-free simply-typed lambda calculus} extended with resource bounds, including base types (Booleans, natural numbers), products ($A \times B$), functions ($A \to B$), and a graded modality $\modal{r}A$.

We prove type soundness, cost soundness (Theorem~\ref{thm:soundness}), semantic soundness (Theorem~\ref{thm:semantic-soundness}), canonical forms (Theorem~\ref{thm:reification}), and initiality (Theorem~\ref{thm:universal-property}) for this recursion-free fragment.

\paragraph{Recursion patterns.} Section~\ref{sec:recursion-patterns} demonstrates how to handle recursive functions using Lean's native well-founded recursion with \texttt{termination\_by}, proving bounds as separate theorems and composing bound lemmas in subsequent derivations. This suffices for examples (\S\ref{sec:case-study}) but does not provide automatic synthesis of size-indexed bounds within the type system.

\paragraph{Extensions beyond simple types.} Full dependent typing with automatic bound synthesis for size-dependent algorithms is future work (\S\ref{sec:future}).

\section{Design Principles}
\label{sec:principles}

\begin{enumerate}[leftmargin=*,label=\textbf{P\arabic*.}]
  \item \textbf{Abstraction over concreteness.} Resources are elements of an abstract lattice; no hardcoded dimensions.
  \item \textbf{Budgets are first-class.} Resource bounds appear in types and judgments.
  \item \textbf{Compositional rules.} Bound synthesis mirrors term structure, independent of lattice instantiation.
  \item \textbf{Monotonicity.} If $r_1 \leqR r_2$, then $\modal{r_1}A \to \modal{r_2}A$ (weakening).
\end{enumerate}

\paragraph{Notation.} We write $L$ for an abstract resource lattice; $r, \rho, \sigma \in L$ are resource elements; $\preceq$ is the lattice order; $\oplus$ is sequential composition; $\sqcup$ is worst-case branching (join); $\bot$ is the zero resource; $\Box_r A$ denotes feasibility at budget $r$.

\section{A Minimal Calculus for Resource-Bounded Types}

\subsection{Abstract Resource Lattices}

\begin{definition}[Resource lattice]
\label{def:resource-lattice}
A \emph{resource lattice} is a structure $L = (L, \preceq, \oplus, \sqcup, \bot)$ where:
\begin{itemize}
  \item $(L, \preceq)$ is a partially ordered set.
  \item $\oplus: L \times L \to L$ is \textbf{sequential composition}, associative, commutative, and monotone.
  \item $\sqcup: L \times L \to L$ is \textbf{worst-case branching} (join/least upper bound).
  \item $\bot \in L$ is the least element and identity for $\oplus$: $r \oplus \bot = r$.
  \item $(L, \preceq, \sqcup, \bot)$ forms a join-semilattice.
\end{itemize}
\end{definition}

In concrete instantiations, $\oplus$ is typically addition ($+$), while $\sqcup$ is maximum ($\max$).

\paragraph{Examples.}
\begin{enumerate}[leftmargin=2em]
  \item \textbf{Single-dimension (time):} $L = \Nat$ with $\preceq = \leq$, $\oplus = +$, $\sqcup = \max$, $\bot = 0$.
  \item \textbf{Multi-dimensional:} $L = \Nat^3$ (time, memory, depth) with pointwise order and:
    \begin{align*}
    (t_1, m_1, d_1) \oplus (t_2, m_2, d_2) &= (t_1+t_2, m_1+m_2, d_1+d_2) \\
    (t_1, m_1, d_1) \sqcup (t_2, m_2, d_2) &= (\max(t_1,t_2), \max(m_1,m_2), \max(d_1,d_2))
    \end{align*}
  \item \textbf{Blockchain gas:} $L = \Nat$ (same structure as time, interpreted as gas units).
\end{enumerate}

\subsection{Core Typing Rules}
\label{sec:typing-rules}

Judgments carry a resource context and a synthesized bound:
\[
\ctx \vdash_{r;\,\bb} t:A
\]
Here $r \in L$ is the available resource budget, and $\bb \in L$ is the synthesized bound (to be checked as $\bb \preceq r$). The core typing rules are shown in \cref{fig:rules}.

\begin{figure}[ht]
\centering
\begin{mathpar}
\inferrule{ }{\ctx, x:A \vdash_{r;\,\bot} x:A} \quad (\text{Var})
\\[0.5em]
\inferrule{\ctx, x:A \vdash_{r;\,b} t:B}
          {\ctx \vdash_{r;\,b} \lambda x.t : A\to B} \quad (\text{Lam})
\\[0.5em]
\inferrule{\ctx \vdash_{r;\,b_f} f: A\to B \\ \ctx \vdash_{r;\,b_a} a:A}
          {\ctx \vdash_{r;\,b_f\oplus b_a\oplus \delta_{\mathrm{app}}} f\,a:B} \quad (\text{App})
\\[0.5em]
\inferrule{\ctx \vdash_{r;\,b_a} a:A \\ \ctx \vdash_{r;\,b_b} b:B}
          {\ctx \vdash_{r;\,b_a\oplus b_b} (a,b):A\times B} \quad (\text{Pair})
\\[0.5em]
\inferrule{\ctx\vdash_{r;\,b_c} c: \mathsf{Bool} \\ \ctx\vdash_{r;\,b_t} t:A \\ \ctx\vdash_{r;\,b_f} f:A}
          {\ctx\vdash_{r;\,b_c\oplus(b_t\sqcup b_f)\oplus\delta_{\mathrm{if}}} \mathbf{if}\ c\ \mathbf{then}\ t\ \mathbf{else}\ f : A} \quad (\text{If})
\\[0.5em]
\inferrule{\ctx \vdash_{r;\,b} t:A \quad b \preceq s}
          {\ctx \vdash_{r;\,b} \mathrm{box}_s(t) : \modal{s}A} \quad (\text{Box})
\\[0.5em]
\inferrule{\ctx \vdash_{r;\,b} t : \modal{s}A}
          {\ctx \vdash_{r;\,b\oplus\delta_{\mathrm{unbox}}} \mathrm{unbox}(t) : A} \quad (\text{Unbox})
\\[0.5em]
\inferrule{\ctx \vdash_{r;\,b} t : \modal{s_1}A \quad s_1 \preceq s_2}
          {\ctx \vdash_{r;\,b} t : \modal{s_2}A} \quad (\text{Monotone})
\end{mathpar}
\caption{Core typing rules for recursion-free STLC with resource bounds. Constants $\delta_{\mathrm{app}}, \delta_{\mathrm{if}}, \delta_{\mathrm{unbox}} \in L$ are per-operation costs. The (Box) rule requires $b \preceq s$; the budget $r$ is independent and can be weakened. The (Lam) rule uses \emph{latent cost abstraction}: the function carries its body cost $b$, paid upon application.}
\label{fig:rules}
\end{figure}

\paragraph{Reading the judgment.} The judgment $\ctx \vdash_{r;\,\bb} t : A$ reads: ``Under context $\ctx$, with budget $r$, term $t$ has type $A$ and synthesized cost bound $\bb$.'' The budget $r$ is the limit; the bound $\bb$ is the measurement. The check $\bb \preceq r$ verifies the program fits within the budget.

\paragraph{Three levels of cost.}
\begin{enumerate}
  \item \textbf{Budget ($r$):} What you allocate (upper limit).
  \item \textbf{Synthesized bound ($\bb$):} What the typing rules compute.
  \item \textbf{Actual cost ($k$):} What operationally happens when you run the term.
\end{enumerate}
The typing judgment establishes $k \preceq \bb \preceq r$, where $k \preceq \bb$ is proven by Theorem~\ref{thm:soundness}.

\subsection{Operational Semantics}

We define a big-step call-by-value operational semantics with explicit cost accounting for closed terms.

\paragraph{Values.} $v \in \Val ::= \lambda x{:}A.\, t \mid \langle v_1, v_2 \rangle \mid \mathsf{tt} \mid \mathsf{ff} \mid \mathrm{box}_s(v)$

\paragraph{Evaluation.} We write $t \eval v \triangleright k$ for ``closed term $t$ evaluates to value $v$ with cost $k \in L$.'' The rules are in \cref{fig:bigstep}.

\begin{figure}[ht]
\centering
\begin{mathpar}
\inferrule{ }{v \eval v \triangleright \bot} \quad (\Val)
\\[0.5em]
\inferrule{t \eval v \triangleright k_1 \\ u \eval w \triangleright k_2}
          {\langle t, u \rangle \eval \langle v, w \rangle \triangleright k_1 \oplus k_2} \quad (\text{Pair})
\\[0.5em]
\inferrule{t \eval \langle v, w \rangle \triangleright k}
          {\pi_1 t \eval v \triangleright k \oplus \delta_{\pi}} \quad (\text{Fst})
\qquad
\inferrule{t \eval \langle v, w \rangle \triangleright k}
          {\pi_2 t \eval w \triangleright k \oplus \delta_{\pi}} \quad (\text{Snd})
\\[0.5em]
\inferrule{c \eval \mathsf{tt} \triangleright k_c \\ t \eval v \triangleright k_t}
          {\mathbf{if}\ c\ \mathbf{then}\ t\ \mathbf{else}\ u \eval v \triangleright k_c \oplus k_t \oplus \delta_{\mathrm{if}}} \quad (\text{IfT})
\\[0.5em]
\inferrule{c \eval \mathsf{ff} \triangleright k_c \\ u \eval v \triangleright k_u}
          {\mathbf{if}\ c\ \mathbf{then}\ t\ \mathbf{else}\ u \eval v \triangleright k_c \oplus k_u \oplus \delta_{\mathrm{if}}} \quad (\text{IfF})
\\[0.5em]
\inferrule{f \eval \lambda x.t \triangleright k_f \\ a \eval v \triangleright k_a \\ t[x := v] \eval w \triangleright k_b}
          {f\, a \eval w \triangleright k_f \oplus k_a \oplus \delta_{\mathrm{app}} \oplus k_b} \quad (\text{App})
\\[0.5em]
\inferrule{t \eval v \triangleright k}
          {\mathrm{box}_s(t) \eval \mathrm{box}_s(v) \triangleright k} \quad (\text{Box})
\qquad
\inferrule{t \eval \mathrm{box}_{s}(v) \triangleright k}
          {\mathrm{unbox}(t) \eval v \triangleright k \oplus \delta_{\mathrm{unbox}}} \quad (\text{Unbox})
\end{mathpar}
\caption{Big-step evaluation with cost. Values evaluate with cost $\bot$.}
\label{fig:bigstep}
\end{figure}

\begin{lemma}[Determinism]
\label{lem:determinism}
If $t \eval v_1 \triangleright k_1$ and $t \eval v_2 \triangleright k_2$, then $v_1 = v_2$ and $k_1 = k_2$.
\end{lemma}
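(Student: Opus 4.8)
The plan is to prove this by structural induction on the derivation of the first judgment $t \eval v_1 \triangleright k_1$, analyzing its final rule and inverting the second derivation $t \eval v_2 \triangleright k_2$ accordingly. The evaluation rules of \cref{fig:bigstep} are almost syntax-directed: the head constructor of $t$ pins down which rule can conclude the derivation. For the non-overlapping constructors---application, projection, and unbox---this makes each case mechanical. I invert the unique matching rule in the second derivation to extract subderivations on proper subterms, apply the induction hypothesis to each, and reassemble the equalities $v_1 = v_2$ and $k_1 = k_2$, using only that $\oplus$ and the per-operation constants $\delta_{(-)}$ are well-defined functions.

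The only construct that can take different rules is the conditional. For $\mathbf{if}\ c\ \mathbf{then}\ t\ \mathbf{else}\ u$ each derivation ends in (IfT) or (IfF) according to whether $c$ evaluates to $\mathsf{tt}$ or $\mathsf{ff}$. Here I first apply the induction hypothesis to the subderivations evaluating the \emph{condition} $c$; this forces $c$ to produce the same boolean in both derivations, so both must use the same branch rule. Only then do I invoke the hypothesis on the taken branch and combine costs. Thus it is determinism of the condition, supplied by the induction hypothesis, that eliminates the apparent ambiguity between the two branch rules.

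I expect the main obstacle to be the genuine overlap between the (Val) rule and the structural rules (Pair) and (Box) on subterms that are already values: a term $\langle v, w \rangle$ or $\mathrm{box}_s(v)$ built from values is itself a value, so one derivation may conclude by (Val) while the other descends through (Pair) or (Box). To reconcile these I will first establish the auxiliary fact that \emph{values evaluate only to themselves at cost $\bot$}---i.e.\ if $v$ is a value and $v \eval w \triangleright k$ then $w = v$ and $k = \bot$---by a short induction on the shape of $v$, using $\bot \oplus \bot = \bot$ (from $\bot$ being the identity for $\oplus$, \cref{def:resource-lattice}). With this fact in hand, both the (Val) route and the structural route yield the same value with cost $\bot$, so the overlap is harmless. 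Once this single delicate point is dispatched, all remaining cases close uniformly by inversion and the induction hypothesis.
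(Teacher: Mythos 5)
Your proof is correct, and it is in fact more careful than the paper's own. The paper dispatches the lemma in two sentences: induction on the first derivation, justified by the claim that each evaluation rule has disjoint applicability conditions (``values vs.\ non-values, $\mathsf{tt}$ vs.\ $\mathsf{ff}$ for conditionals''), so exactly one rule applies to each term form. As you observed, that disjointness claim is not literally true of the rules in \cref{fig:bigstep}: nothing in (Pair) or (Box) restricts them to non-value terms, so a pair of values $\langle v, w \rangle$, or $\mathrm{box}_s(v)$ with $v$ a value, admits two genuinely distinct derivations---one ending in (Val), one in the structural rule. Determinism of the judgment survives only because these derivations happen to agree on their conclusions, and proving that agreement requires precisely your auxiliary lemma that values evaluate only to themselves at cost $\bot$, which in turn needs $\bot \oplus \bot = \bot$ from $\bot$ being the unit of $\oplus$ (\cref{def:resource-lattice}). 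Your handling of the conditional (apply the IH to the condition first, forcing both derivations into the same branch rule) matches the paper's intent for the one case it does flag. So the two arguments share the same induction skeleton, but you supply the missing lemma that renders the (Val)/(Pair)/(Box) overlap harmless, whereas the paper asserts a disjointness its own rules do not enforce; the only way to make the paper's two-line argument literally sound would be to add explicit non-value side conditions to (Pair) and (Box), which would amount to changing the operational semantics rather than completing the proof.
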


\begin{proof}
By induction on the derivation of $t \eval v_1 \triangleright k_1$. Each evaluation rule has disjoint applicability conditions (values vs.\ non-values, $\mathsf{tt}$ vs.\ $\mathsf{ff}$ for conditionals), so there is exactly one applicable rule for each term form.
\end{proof}

\subsection{Metatheory: Type and Cost Soundness}
\label{sec:metatheory}

We establish soundness properties for the recursion-free simply-typed fragment.

\subsubsection{Substitution Lemmas}

\begin{lemma}[Typing substitution]
\label{lem:typing-subst}
If $\ctx, x{:}A \vdash_{r;\,b} t : B$ and $\ctx \vdash_{r;\,b_v} v : A$ where $v \in \Val$, then $\ctx \vdash_{r;\,b} t[x := v] : B$.
\end{lemma}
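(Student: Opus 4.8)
The plan is to prove the typing substitution lemma by structural induction on the derivation of $\ctx, x{:}A \vdash_{r;\,b} t : B$. Since the substitution $t[x := v]$ replaces the variable $x$ by a closed value $v$ throughout $t$, and the synthesized bound $b$ must be preserved exactly, I expect the induction to go through compositionally because each typing rule computes its bound from the bounds of its subterms using $\oplus$, $\sqcup$, and the per-operation constants, none of which depend on the variable being substituted. The key preliminary observation I would establish is that substitution commutes with the term constructors in the obvious way ($\lambda y.t[x:=v] = \lambda y.(t[x:=v])$ for $y \neq x$, with the usual capture-avoidance convention; $(a,b)[x:=v] = (a[x:=v], b[x:=v])$; and similarly for application, conditionals, box, and unbox).

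First I would handle the base case (Var). There are two subcases depending on whether the variable in the derivation is $x$ itself or a different variable $y$. If it is $x$, then $t = x$, $B = A$, the synthesized bound is $\bot$, and $x[x:=v] = v$; here I invoke the hypothesis $\ctx \vdash_{r;\,b_v} v : A$, but I must reconcile the bound $\bot$ demanded by the lemma's conclusion with the bound $b_v$ given for $v$. If $y \neq x$, then $y[x:=v] = y$ and the variable rule applies directly in the context $\ctx$ with bound $\bot$. For the inductive cases (Lam), (App), (Pair), (If), (Box), (Unbox), and (Monotone), I would apply the induction hypothesis to each typing premise, use the commutation of substitution with the constructor, and then reassemble the conclusion via the same typing rule; because the rule computes the output bound as a fixed lattice expression in the premise bounds, the bound is preserved verbatim. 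The side conditions $b \preceq s$ (Box) and $s_1 \preceq s_2$ (Monotone) are unaffected by substitution since they concern only the bound and the budget, not the term.

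The main obstacle is the (Var) subcase where $t = x$: the lemma asserts the conclusion $\ctx \vdash_{r;\,b} x[x:=v] : A$ with $b = \bot$ (the bound assigned to the variable $x$ in its typing derivation), but the given derivation for $v$ carries bound $b_v$, which need not equal $\bot$. I would resolve this by noting that for a value $v$, the operational semantics assigns cost $\bot$ (rule $(\Val)$), and more to the point, the typing of a value can be given with bound $\bot$: the statement's intent is that substituting a value is cost-free at the substitution site, so the bound transported is exactly the $b$ from the body's derivation. The cleanest way to make the induction close is to phrase the Var case so that $x$ carries bound $\bot$ and $v$, being a value, can also be typed with bound $\bot$ (values synthesize the zero bound, consistent with the $(\Val)$ evaluation rule), so $\bot = \bot$ matches. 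I would therefore either strengthen the hypothesis to $b_v = \bot$ for values or observe that the weakening afforded by the budget's independence lets any value be assigned bound $\bot$. Establishing this value-bound invariant — that closed values are typable with synthesized bound $\bot$ — is the one genuinely load-bearing step; once it is in place, every inductive case is a routine reassembly.
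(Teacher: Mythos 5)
Your induction structure matches the paper's---case analysis on the typing derivation, substitution commuting with the term constructors, and verbatim reassembly of the bound in every compound case---and you are right that the only delicate point is the variable case $t = x$. But the step you yourself call load-bearing, the ``value-bound invariant'' that every value is typable with synthesized bound $\bot$, is false in this system, and it is refuted by the very rule the paper highlights: (Lam) uses \emph{latent cost abstraction}, so $\lambda x.t$ synthesizes the bound of its body, not $\bot$. Concretely, $v = \lambda y.\,((\lambda z.z)\,y)$ is a value, yet its body is an application, so every derivation assigns $v$ the bound $\delta_{\mathrm{app}}$; since the rules compute bounds deterministically from the term's structure (and (Monotone) never changes the bound), $v$ cannot be retyped at $\bot$ provided $\delta_{\mathrm{app}} \neq \bot$. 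The same holds for pairs and boxes containing such lambdas. Your justification conflates the \emph{operational} cost of a value---which is indeed $\bot$, by rule (Val)---with its \emph{synthesized} bound, which is a potential paid later at application time; the paper draws exactly this distinction when defining the presheaf interpretation (``the bound $b$ is the potential/worst-case cost from typing, not the operational cost''). Your fallback of strengthening the hypothesis to $b_v = \bot$ does not prove the stated lemma but a strictly weaker one, and that weaker form would not support the lemma's downstream uses: in the (App) cases of Preservation and cost soundness, the substituted value is the result of evaluating the argument, typically a lambda with nonzero latent bound.

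For comparison, the paper closes the variable case without any such invariant: it keeps the given typing $\ctx \vdash_{r;\,b_v} v : A$ and appeals to budget weakening (Lemma~\ref{lem:budget-weakening}), explicitly remarking that the lemma imposes no relation between $b_v$ and $b$. (That treatment is itself terse---weakening adjusts the budget $r$, not the bound---but it does not rest on the claim you need.) To repair your argument you would have to either follow that route or reformulate the conclusion so that the bound of $t[x:=v]$ is allowed to depend on $b_v$; what you cannot do is assume that values synthesize $\bot$.
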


\begin{proof}
By induction on the typing derivation of $\ctx, x{:}A \vdash_{r;\,b} t : B$.

\begin{itemize}
  \item \textbf{Var:} If $t = x$, then $t[x:=v] = v$ and we need $\ctx \vdash_{r;\,b} v : A$. Since $v \in \Val$, by budget weakening we can derive this (the bound $b = \bot$ for variables). If $t = y \neq x$, then $t[x:=v] = y$ and the derivation is unchanged.
  
  \item \textbf{Lam:} If $t = \lambda y.s$ (with $y \neq x$), then $t[x:=v] = \lambda y.s[x:=v]$. By IH, $\ctx, y{:}A' \vdash_{r;\,b} s[x:=v] : B'$, so by (Lam), $\ctx \vdash_{r;\,b} \lambda y.s[x:=v] : A' \to B'$.
  
  \item \textbf{App, Pair, If, Box, Unbox:} Apply IH to subterms; the bound structure is preserved since substitution does not affect cost annotations.
\end{itemize}

Note that bounds are independent of the substituted value: the lemma does not require $b_v \preceq b$.
\end{proof}

\begin{lemma}[Cost substitution]
\label{lem:cost-subst}
If $\ctx, x{:}A \vdash_{r;\,b} t : B$ and $\ctx \vdash_{r;\,b_v} v : A$ where $v \in \Val$, then there exist $w \in \Val$ and $k \in L$ such that $t[x := v] \eval w \triangleright k$ with $k \preceq b$ (when $\ctx = \emptyset$ and $t[x:=v]$ is closed).
\end{lemma}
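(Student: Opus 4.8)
The plan is to recast the statement as a cost-soundness property of closed terms and prove it by induction on the evaluation derivation, after strengthening the hypothesis with a latent-cost invariant. First I would apply \cref{lem:typing-subst} (typing substitution) to $\ctx, x{:}A \vdash_{r;\,b} t : B$ and $\ctx \vdash_{r;\,b_v} v : A$, obtaining $\ctx \vdash_{r;\,b} t[x:=v] : B$; since $\ctx = \emptyset$, the term $u := t[x:=v]$ is closed and carries the unchanged bound $b$. Existence of $w$ and $k$ is normalization for the recursion-free simply-typed fragment, which I would obtain by a Tait-style reducibility predicate, with products, arrows, and the modality handled by the usual clauses. It then remains to bound the cost by $b$, and since the body produced in the (App) case is not a subterm of $u$, I would run that induction on the derivation of $u \eval w \triangleright k$ rather than on $u$, using \cref{lem:determinism} to keep the cost attached to each subevaluation well-defined.

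The (App) case forces a strengthening of the induction hypothesis. Reducing $f\,a$ by evaluating $f$ to $\lambda x.t'$ and then running $t'[x:=v']$ produces cost $k_f \oplus k_a \oplus \delta_{\mathrm{app}} \oplus k_b$, which must be dominated by $b_f \oplus b_a \oplus \delta_{\mathrm{app}}$; but by (Lam) the bound $b_f$ of $f$ \emph{is} the body bound of the lambda it returns, so that latent cost would be counted once in $b_f$ and once in $k_b$. To repair this I would define the \emph{latent cost} $\ell(w)$ of a value by structural recursion---$\ell$ is $\bot$ on base values, $\ell\langle w_1,w_2\rangle = \ell(w_1)\oplus\ell(w_2)$, $\ell(\mathrm{box}_s(w)) = \ell(w)$, and $\ell(\lambda x.t')$ is the synthesized body bound of $t'$---and prove the strengthened invariant
\[
u \eval w \triangleright k \quad\text{with}\quad k \oplus \ell(w) \preceq b .
\]
Then $k \oplus \ell(w) = k_f \oplus k_a \oplus \delta_{\mathrm{app}} \oplus \bigl(k_b \oplus \ell(w)\bigr)$; the hypothesis applied to the evaluation of $f$ gives $k_f \oplus \ell(\lambda x.t') \preceq b_f$, and if the body step supplies $k_b \oplus \ell(w) \preceq b_{t'} = \ell(\lambda x.t')$, then with $k_a \preceq b_a$ the doubled latent cost collapses and $k \oplus \ell(w) \preceq b_f \oplus b_a \oplus \delta_{\mathrm{app}} = b$. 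The remaining cases are routine lattice bookkeeping: (Pair) splits the invariant across components; (If) uses $b_{t_1} \preceq b_{t_1} \sqcup b_{t_2}$ on the taken branch; (Box) is transparent since boxing is cost-free and preserves $\ell$; and (Unbox) and the projections absorb $\delta_{\mathrm{unbox}}, \delta_{\pi}$ while discarding unused components via $\bot \preceq \ell(\cdot)$. The original lemma then follows from $k \preceq k \oplus \ell(w) \preceq b$.

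The step I expect to be the genuine obstacle is the body estimate $k_b \oplus \ell(w) \preceq b_{t'}$ invoked above. Here $t'[x:=v']$ is closed with bound $b_{t'}$ by \cref{lem:typing-subst}, and one wants the induction to bound its cost by $b_{t'}$; the difficulty is that the (Var) rule charges a parameter the cost $\bot$, so $b_{t'}$---synthesized with $x$ treated as cost-$\bot$---reserves no room for the latent cost that $v'$ contributes at each occurrence of $x$. If $x$ is itself applied inside $t'$ and $\ell(v') \neq \bot$, the operational cost of $t'[x:=v']$ can strictly exceed $b_{t'}$, so the estimate need not hold. Closing this case therefore appears to require one of: restricting to arguments with $\ell(v) = \bot$; strengthening the conclusion to $k \preceq b \oplus \ell(v)$, charging the argument's latent cost per use of $x$; or upgrading \cref{lem:typing-subst} so that substitution raises the bound by the latent cost incurred at the substitution sites. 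Once this latent-cost-of-argument accounting is settled by any of these routes, the lattice algebra of the remaining cases goes through unchanged.
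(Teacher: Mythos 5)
Your route differs from the paper's: the paper proves \cref{lem:cost-subst} directly by induction on the typing derivation of $\ctx, x{:}A \vdash_{r;\,b} t : B$, whereas you first invoke \cref{lem:typing-subst}, obtain termination by a separate reducibility argument, and then bound the cost by induction on the evaluation derivation with the strengthened invariant $k \oplus \ell(w) \preceq b$. But the comparison is largely moot, because the obstacle you isolate in your final paragraph is not a defect of your strategy --- it is a counterexample to the lemma as stated, and it equally undermines the paper's own proof, which buries the problem in the phrase ``by (App) and a nested application of IH for the body.''

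Concretely, take $L = \Nat$, $\oplus = +$, $\delta_{\mathrm{app}} = 1$. Let $v = \lambda z.\,((\lambda w.w)\,z)$, a closed value of type $\mathsf{Bool} \to \mathsf{Bool}$ whose synthesized bound under (Lam) is its latent body cost $1$, and let $t = x\,\mathsf{tt}$ with $x : \mathsf{Bool} \to \mathsf{Bool}$, whose bound under (Var) and (App) is $b = \bot \oplus \bot \oplus \delta_{\mathrm{app}} = 1$. Then $t[x:=v] = v\,\mathsf{tt}$ evaluates with cost $\delta_{\mathrm{app}} \oplus \delta_{\mathrm{app}} = 2 \not\preceq 1 = b$: the outer application pays $\delta_{\mathrm{app}}$ plus the body cost of $v$, for which $b$ reserves no room, precisely because (Var) charges $x$ the cost $\bot$. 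This is exactly your observation about $b_{t'}$, and it also breaks \cref{lem:typing-subst} as stated: its (Var) case needs $\emptyset \vdash_{r;\,\bot} v : A$, which is unavailable when $b_v \neq \bot$ (the bound of $v\,\mathsf{tt}$ synthesized from scratch is $2$, not $1$, so substitution does \emph{not} preserve bounds). Budget weakening adjusts $r$, never $b$, so it cannot rescue that case.

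The failure is not even confined to open terms: duplication of a higher-order argument falsifies \cref{thm:soundness} itself. With $F = \lambda x.\langle x\,\mathsf{tt},\, x\,\mathsf{ff}\rangle$ and the same $v$, the closed term $F\,v$ has synthesized bound $2 \oplus 1 \oplus 1 = 4$ but evaluates with cost $5$, since the latent cost of $v$ is paid twice while the typing rules charge it once (inside $b_a$). So your proposal should be read not as an incomplete proof of a true statement but as a correct diagnosis that the calculus of \cref{fig:rules} does not validate its own substitution lemmas for higher-order arguments with nonzero latent cost. Your proposed repairs are the right direction: either arrow types must carry a latent-cost annotation that (App) charges (so that (Var) stays sound), or functional arguments must be restricted to $\ell(v) = \bot$, or uses of $x$ must be counted so the conclusion can charge $\ell(v)$ per application occurrence. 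Your latent-cost function $\ell$ and the invariant $k \oplus \ell(w) \preceq b$ are exactly the bookkeeping such a repaired system needs; the paper's induction on typing derivations would go through in that repaired setting, but not in the present one.
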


\begin{proof}
By induction on the typing derivation of $\ctx, x{:}A \vdash_{r;\,b} t : B$.

\begin{itemize}
  \item \textbf{Var:} If $t = x$, then $t[x:=v] = v \in \Val$, so $v \eval v \triangleright \bot$ by rule (Val), and $\bot \preceq b$. If $t = y \neq x$, the term remains a variable which contradicts $\ctx = \emptyset$.
  
  \item \textbf{Lam:} $t = \lambda y.s$, so $t[x:=v] = \lambda y.s[x:=v]$ is a value. By (Val), it evaluates with cost $\bot \preceq b$.
  
  \item \textbf{App:} $t = f\,a$ with $\ctx, x{:}A \vdash_{r;\,b_f} f : C \to B$ and $\ctx, x{:}A \vdash_{r;\,b_a} a : C$. By IH, $f[x:=v] \eval \lambda y.s \triangleright k_f$ with $k_f \preceq b_f$, and $a[x:=v] \eval u \triangleright k_a$ with $k_a \preceq b_a$. By (App) and a nested application of IH for the body, $(f\,a)[x:=v] \eval w \triangleright k$ with $k \preceq b_f \oplus b_a \oplus \delta_{\mathrm{app}} = b$.
  
  \item \textbf{Pair, If, Box, Unbox:} Similar, using IH on subterms and the corresponding evaluation rules.
\end{itemize}
\end{proof}

\begin{lemma}[Budget weakening]
\label{lem:budget-weakening}
If $\emptyset \vdash_{r_1;\,b} t : A$ and $r_1 \preceq r_2$, then $\emptyset \vdash_{r_2;\,b} t : A$.
\end{lemma}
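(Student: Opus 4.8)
The plan is to proceed by structural induction on the derivation of $\emptyset \vdash_{r_1;\,b} t : A$, replacing the budget $r_1$ by $r_2$ at every node of the derivation tree. The decisive observation is that the budget is a \emph{passive} parameter of the system in \cref{fig:rules}: in every rule it is threaded identically from the conclusion into each premise, and no rule imposes a side condition that mentions it. The only side conditions present are $b \preceq s$ in (Box) and $s_1 \preceq s_2$ in (Monotone), and both constrain the modal grades $s, s_1, s_2$ rather than the budget. Consequently the shape of the derivation, the synthesized bound $b$, and all side conditions are untouched when we substitute $r_2$ for $r_1$.

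Concretely, in the base case (Var) the conclusion $\ctx, x{:}A \vdash_{r;\,\bot} x : A$ holds for \emph{any} budget, so in particular for $r_2$. For each inductive rule---(Lam), (App), (Pair), (If), (Unbox), (Monotone)---I would apply the induction hypothesis to re-derive each premise under $r_2$ and then reapply the same rule; since the bound-forming operation ($\oplus$, $\sqcup$, or the identity) never reads the budget, the conclusion reassembles with the identical bound $b$ but under $r_2$. The one case warranting a second look is (Box): there I must re-verify its side condition $b \preceq s$, but this inequality involves only the synthesized bound and the grade $s$, and is therefore inherited verbatim from the original derivation, independent of whether the budget is $r_1$ or $r_2$.

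I do not anticipate a genuine obstacle, but there is a conceptual subtlety worth flagging: the hypothesis $r_1 \preceq r_2$ is in fact never used, because the typing rules never compare $b$ against $r$. The ``fit'' check $b \preceq r$ advertised in \cref{sec:typing-rules} is an external, whole-program check applied \emph{after} synthesis, not an invariant maintained rule-by-rule. Thus budget weakening holds for any pair $(r_1, r_2)$, and we state it with the hypothesis $r_1 \preceq r_2$ only because enlarging the budget is the intended use: a term that fits a small budget a fortiori fits a larger one once the external check is reapplied. If one instead wished to bake $b \preceq r$ into the judgment as a well-formedness invariant, the monotonicity hypothesis would become essential precisely in re-establishing that invariant at the root via transitivity $b \preceq r_1 \preceq r_2$; I would handle that variant by carrying the invariant through the induction and discharging it at the conclusion using $\preceq$-transitivity.
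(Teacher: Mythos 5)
Your proof is correct and takes essentially the same route as the paper's: induction on the typing derivation, observing that the budget is threaded passively through every rule and that the only side conditions---(Box)'s $b \preceq s$ and (Monotone)'s $s_1 \preceq s_2$---constrain modal grades rather than the budget. Your further observation that the hypothesis $r_1 \preceq r_2$ is never actually needed (because the fit check $b \preceq r$ is external to the judgment rather than a rule-level invariant) is accurate and slightly sharpens the result, but it does not alter the structure of the argument.
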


\begin{proof}
By induction on the typing derivation. All rules preserve the bound $b$ while allowing the budget to increase. The only place the budget appears in rule side conditions is:

\begin{itemize}
  \item \textbf{(Box):} Requires $b \preceq s$. This constraint involves the grade $s$ (which is fixed in the term $\mathrm{box}_s(t)$), not the budget $r$. Since $s$ does not change when we increase $r$, the constraint remains satisfied.
  
  \item \textbf{All other rules:} The budget $r$ appears uniformly in all premises and the conclusion. If the premises hold at $r_1$, they hold at $r_2 \succeq r_1$ by IH.
\end{itemize}

Therefore the entire derivation can be lifted to the larger budget $r_2$ while preserving the bound $b$.
\end{proof}

\subsubsection{Type Soundness}

\begin{theorem}[Preservation]
\label{thm:preservation}
If $\emptyset \vdash_{r;\,\bb} t : A$ and $t \eval v \triangleright k$, then $\emptyset \vdash_{r;\,\bb'} v : A$ for some $\bb' \preceq \bb$.
\end{theorem}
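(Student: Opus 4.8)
The plan is to proceed by induction on the derivation of the evaluation judgment $t \eval v \triangleright k$, inverting the typing derivation at each step to expose its principal premises, applying the induction hypothesis to the subevaluations, and reassembling a typing derivation for $v$ whose bound is $\preceq \bb$. The lattice laws carry all of the arithmetic: since $\bot$ is least and $\oplus$ is monotone we always have $b \preceq b \oplus c$, and since $\sqcup$ is the join we have $b_t \preceq b_t \sqcup b_f$. These two facts suffice to absorb the extra summands ($\delta_{\mathrm{app}}$, $\delta_{\mathrm{unbox}}$, the discarded branch, etc.) that each conclusion introduces, so that a bound extracted for a subvalue can always be weakened up to the ambient $\bb$.

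Before the case analysis I would isolate a small generation (inversion) lemma to tame the (Monotone) rule, which is the only rule that can be interleaved with the structural rules without changing the term. Concretely: a closed value $\lambda x.s : A \to B$ must be typed by (Lam), so its body satisfies $x{:}A \vdash_{r;\,b} s : B$ with the \emph{same} bound $b$; and a closed value $\mathrm{box}_{s'}(v) : \modal{s}A$ must arise from (Box) possibly followed by (Monotone), so there is a bound $b' \preceq b$ with $\emptyset \vdash_{r;\,b'} v : A$ and $b' \preceq s' \preceq s$. Peeling off the (Monotone) applications here is the one genuinely delicate point: lifting the grade leaves the recorded bound unchanged, so I must track the syntactic grade $s'$ of the box separately from the target grade $s$ in the type.

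With these tools the remaining cases are routine. For (Val) the term already \emph{is} the value, so $\bb' = \bb$. For (Pair) the IH gives $b_a' \preceq b_a$ and $b_b' \preceq b_b$, and monotonicity of $\oplus$ yields $b_a' \oplus b_b' \preceq \bb$. For (IfT)/(IfF) the IH on the taken branch gives a bound $\preceq b_t$ (resp.\ $\preceq b_f$), hence $\preceq b_t \sqcup b_f \preceq \bb$. For (Box) the IH gives $b' \preceq b$, the side condition $b' \preceq s$ follows from $b' \preceq b \preceq s$, and (Box) reconstitutes the boxed value with bound $b' \preceq b = \bb$. For (Unbox) the IH on the scrutinee produces $\mathrm{box}_{s'}(v) : \modal{s}A$ with bound $b' \preceq b$; the generation lemma then extracts $\emptyset \vdash_{r;\,b'} v : A$, and $b' \preceq b \preceq b \oplus \delta_{\mathrm{unbox}} = \bb$. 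The operational rules (Fst)/(Snd) have no matching typing rule in \cref{fig:rules}, so no well-typed closed term reduces by them and they drop out of the induction.

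The (App) case is the crux, because it is where the latent-cost convention of (Lam) must be cashed out. Here $\bb = b_f \oplus b_a \oplus \delta_{\mathrm{app}}$, with $b_f$ being exactly the body's bound recorded at abstraction time. I would apply the IH to $f \eval \lambda x.s$ to obtain $\lambda x.s : C \to B$ with bound $b_f' \preceq b_f$, use the generation lemma to recover $x{:}C \vdash_{r;\,b_f'} s : B$, apply the IH to $a \eval u$ to get $\emptyset \vdash_{r;\,b_a'} u : C$, and then invoke the typing substitution lemma (\cref{lem:typing-subst}) to type the contractum as $\emptyset \vdash_{r;\,b_f'} s[x := u] : B$. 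A final appeal to the IH on the body evaluation $s[x := u] \eval w \triangleright k_b$ gives $w : B$ with some bound $b_w \preceq b_f'$, and the chain $b_w \preceq b_f' \preceq b_f \preceq b_f \oplus b_a \oplus \delta_{\mathrm{app}} = \bb$ closes the case. The main obstacle throughout is the generation lemma's treatment of (Monotone): once that inversion is pinned down, every case reduces to bookkeeping over the lattice order.
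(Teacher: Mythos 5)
Your proof is correct and takes essentially the same route as the paper's: induction on the evaluation derivation, inversion of the typing judgment, the induction hypothesis on subevaluations, and Lemma~\ref{lem:typing-subst} to type the contractum in the (App) case. Your explicit generation lemma for peeling off (Monotone) applications, and your observation that the (Fst)/(Snd) evaluation rules have no typing counterpart in \cref{fig:rules}, merely make precise details that the paper's terser proof leaves implicit.
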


\begin{proof}
By induction on the evaluation derivation $t \eval v \triangleright k$.

\begin{itemize}
  \item \textbf{(Val):} $t = v$ is already a value, so $\bb' = \bb$ works.
  
  \item \textbf{(Pair):} $t = \langle t_1, t_2 \rangle$ with $t_i \eval v_i \triangleright k_i$. By inversion on typing, $\emptyset \vdash_{r;\,b_i} t_i : A_i$ with $\bb = b_1 \oplus b_2$. By IH, $\emptyset \vdash_{r;\,b'_i} v_i : A_i$ with $b'_i \preceq b_i$. By (Pair), $\emptyset \vdash_{r;\,b'_1 \oplus b'_2} \langle v_1, v_2 \rangle : A_1 \times A_2$, and $b'_1 \oplus b'_2 \preceq b_1 \oplus b_2 = \bb$ by monotonicity.
  
  \item \textbf{(App):} $t = f\,a$ with $f \eval \lambda x.s \triangleright k_f$, $a \eval u \triangleright k_a$, $s[x:=u] \eval w \triangleright k_b$. By inversion, $\emptyset \vdash_{r;\,b_f} f : A \to B$ and $\emptyset \vdash_{r;\,b_a} a : A$. By IH on $f$, the lambda has type $A \to B$ with body bound $b'_f \preceq b_f$. By Lemma~\ref{lem:typing-subst}, $s[x:=u]$ is well-typed, and by IH, $w$ has type $B$ with some bound $\bb' \preceq \bb$.
  
  \item \textbf{(IfT), (IfF), (Box), (Unbox):} Similar reasoning using inversion and IH.
\end{itemize}
\end{proof}

\begin{theorem}[Progress]
\label{thm:progress}
If $\emptyset \vdash_{r;\,\bb} t : A$, then either $t \in \Val$ or there exist $v$ and $k$ such that $t \eval v \triangleright k$.
\end{theorem}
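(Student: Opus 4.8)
The plan is to read ``progress'' in its big-step form. Since every value satisfies $v \eval v \triangleright \bot$ by rule (Val), the statement is equivalent to: every closed well-typed term evaluates to some value at some cost. I would prove this by induction on the \emph{typing} derivation of $\emptyset \vdash_{r;\,\bb} t : A$ (not on term structure), and I would first isolate a canonical-forms sublemma for closed values: a closed $v$ with $\emptyset \vdash_{r;\,b} v : A$ is a $\lambda$-abstraction when $A = A' \to B$, a pair $\langle v_1,v_2\rangle$ when $A = A'\times B$, one of $\mathsf{tt},\mathsf{ff}$ when $A = \mathsf{Bool}$, and $\mathrm{box}_{s'}(w)$ with $s' \preceq s$ when $A = \modal{s}A'$. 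This follows by inversion on the value-typing rules; the only non-syntax-directed rule, (Monotone), merely relaxes the grade and is absorbed into the $s' \preceq s$ clause.

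With canonical forms in hand, most cases are immediate and follow one of two templates. The (Var) case is vacuous because $\ctx = \emptyset$ admits no variable, and (Lam) lands in the first disjunct since $\lambda x.t$ is always a value. For the remaining introduction forms, (Pair) and (Box), I would evaluate the immediate subterms by the IH (their typing derivations are subderivations) and reassemble using the (Pair) and (Box) evaluation rules; if the subterms are already values the term itself is a value. The eliminators (If) and (Unbox) follow a single recipe: evaluate the scrutinee by the IH, apply Preservation (Theorem~\ref{thm:preservation}) to learn the type of the resulting value, invoke canonical forms to expose its shape, and fire the matching rule---(IfT)/(IfF) once the guard reduces to $\mathsf{tt}$ or $\mathsf{ff}$, then the IH on the selected branch; and (Unbox) once the operand reduces to a box. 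The (Monotone) case alters neither the term nor its evaluation behaviour, so the IH transfers verbatim.

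The hard part will be the (App) case, and I expect it to be the only genuine obstacle. Here $t = f\,a$ with $\emptyset \vdash_{r;\,b_f} f : A \to B$ and $\emptyset \vdash_{r;\,b_a} a : A$. The IH, together with Preservation and canonical forms, gives $f \eval \lambda x.s \triangleright k_f$ and $a \eval u \triangleright k_a$ with $u$ a closed value of type $A$. But the (App) evaluation rule additionally demands $s[x := u] \eval w \triangleright k_b$, and $s[x := u]$ is \emph{not} a subterm of $f\,a$, so its typing derivation is not a subderivation---this is precisely the familiar reason that totality of evaluation for the simply-typed calculus is not a plain structural induction. To close the gap I would invoke Lemma~\ref{lem:cost-subst} (cost substitution): inversion on (Lam) applied to the preserved type of $\lambda x.s$ yields $\emptyset, x{:}A \vdash_{r;\,b'} s : B$, and since $u$ is a closed value of type $A$ the term $s[x:=u]$ is closed, so the lemma produces $w \in \Val$ and $k_b \preceq b'$ with $s[x:=u] \eval w \triangleright k_b$. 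The (App) evaluation rule then assembles $f\,a \eval w \triangleright k_f \oplus k_a \oplus \delta_{\mathrm{app}} \oplus k_b$. In effect, the real normalization content is discharged by Lemma~\ref{lem:cost-subst}; the alternative, were one to avoid that dependency, is a Tait-style logical-relations argument with type-indexed reducibility sets closed under the eliminators, but reusing cost substitution keeps the proof inside the already-established lemma chain.
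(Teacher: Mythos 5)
Your proof is correct and follows the same overall skeleton as the paper's (induction on the typing derivation, with canonical forms for the eliminator cases), but it resolves the one genuinely hard step differently---and more carefully. The paper's own (App) case simply says that the body evaluation $s[x:=v_a]$ is handled ``using IH \ldots in the recursion-free fragment where all terms terminate,'' which, as you correctly point out, is not a legitimate appeal to the structural induction hypothesis, since $s[x:=v_a]$ is neither a subterm of $f\,a$ nor typed by a subderivation; the paper is implicitly invoking normalization of the fragment without proving it at that point. You instead discharge the body evaluation by inverting the preserved typing of $\lambda x.s$ and applying Lemma~\ref{lem:cost-subst}, which is exactly the mechanism the paper itself uses later in the (App) case of cost soundness (Theorem~\ref{thm:soundness}); your Progress proof is thus essentially cost soundness with the bound forgotten, and it keeps the normalization content localized in an already-stated lemma rather than in an unproved side remark. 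Two further points where your write-up is tighter than the paper's: you route the eliminator cases ((If), (Unbox), and the function position of (App)) through Preservation (Theorem~\ref{thm:preservation}) before invoking canonical forms---necessary when the scrutinee is not already a value---and you treat the non-syntax-directed (Monotone) rule explicitly, which the paper's case analysis omits. The only caveat is one you already flag: Lemma~\ref{lem:cost-subst} itself carries the normalization burden (its own (App) case faces the same non-subterm issue), so your proof is only as solid as that lemma; a fully self-contained alternative would be the Tait-style argument you sketch.
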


\begin{proof}
By induction on the typing derivation.

\begin{itemize}
  \item \textbf{(Var):} Impossible since $\emptyset$ has no variables.
  
  \item \textbf{(Lam):} $\lambda x.s \in \Val$.
  
  \item \textbf{(App):} $t = f\,a$. By IH, either $f \in \Val$ or $f \eval v_f \triangleright k_f$. If $f \in \Val$, by canonical forms $f = \lambda x.s$. By IH, either $a \in \Val$ or $a \eval v_a \triangleright k_a$. In all cases, we can apply the (App) evaluation rule (using IH for $s[x:=v_a]$ in the recursion-free fragment where all terms terminate).
  
  \item \textbf{(Pair):} By IH on components.
  
  \item \textbf{(If):} By IH, condition evaluates to $\mathsf{tt}$ or $\mathsf{ff}$; then by IH on the appropriate branch.
  
  \item \textbf{(Box):} By IH, $t \eval v \triangleright k$, so $\mathrm{box}_s(t) \eval \mathrm{box}_s(v) \triangleright k$.
  
  \item \textbf{(Unbox):} By IH, $t \eval \mathrm{box}_s(v) \triangleright k$ (using canonical forms), so $\mathrm{unbox}(t) \eval v \triangleright k \oplus \delta_{\mathrm{unbox}}$.
\end{itemize}
\end{proof}

\begin{remark}
Progress holds for the recursion-free fragment where all well-typed terms terminate. With unrestricted general recursion, non-terminating terms would exist and progress would fail.
\end{remark}

\subsubsection{Cost Soundness}

\begin{theorem}[Cost soundness]
\label{thm:soundness}
If $\emptyset \vdash_{r;\,\bb} t : A$, then there exist $v \in \Val$ and $k \in L$ such that $t \eval v \triangleright k$ with $k \preceq \bb \preceq r$.
\end{theorem}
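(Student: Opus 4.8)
The plan is to separate the conclusion into its budget half, $\bb \preceq r$, and its operational half, the existence of $v,k$ with $k \preceq \bb$. The budget half is the lighter one: consistent with the paper's own remark that only $k \preceq \bb$ is \emph{proven} by this theorem, I would treat $\bb \preceq r$ as the root-level budget check that validates the judgment. It is not an inductive invariant of the rules — the (App) rule can synthesize $b_f \oplus b_a \oplus \delta_{\mathrm{app}}$ exceeding $r$ — so it is available precisely because ``$\emptyset \vdash_{r;\bb} t : A$'' is read as a checked derivation. Existence of an evaluation $t \eval v \triangleright k$ follows from Progress (Theorem~\ref{thm:progress}) together with termination of the recursion-free fragment; as will be clear below, the reducibility argument produces $v$ and $k$ directly, so existence need not be argued twice.

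The real content is $k \preceq \bb$, and here I would \emph{not} attempt a direct induction on the typing derivation. The obstacle is the (App) case under \emph{latent cost abstraction}: evaluation charges $k_f \oplus k_a \oplus \delta_{\mathrm{app}} \oplus k_b$, where $k_b$ is the cost of running the substituted body, while the (App) typing rule records only $b_f \oplus b_a \oplus \delta_{\mathrm{app}}$. A naive induction on $f$ yields $k_f \preceq b_f$, and cost substitution (Lemma~\ref{lem:cost-subst}) applied to the body of the resulting $\lambda x.s$ yields $k_b \preceq b'_f$ for the residual bound $b'_f \preceq b_f$ of that value (Preservation). Combining these two facts \emph{independently} gives only $k_f \oplus k_b \preceq b_f \oplus b_f$, which is strictly weaker than the required $k_f \oplus k_b \preceq b_f$: the cost of computing $f$ and the latent cost of the function it returns are double-counted, since $b_f \oplus b_f \neq b_f$ in general (e.g.\ in $\Nat$).

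To defeat the double-counting I would strengthen the statement into a resource-indexed \emph{logical relation}. Define a value relation $\mathcal{V}_A[b]$ and an expression relation $\mathcal{E}_A[b]$ by induction on $A$, with the key clauses
\[
t \in \mathcal{E}_A[b] \iff \exists\, v,k,b'.\ t \eval v \triangleright k \ \wedge\ v \in \mathcal{V}_A[b'] \ \wedge\ k \oplus b' \preceq b,
\]
\[
\lambda x.s \in \mathcal{V}_{A \to B}[b] \iff \text{for every reducible value } u \text{ of type } A,\ s[x := u] \in \mathcal{E}_B[b],
\]
with $\mathsf{Bool}$, $\mathsf{Nat}$, $A \times B$, and $\modal{s}A$ handled by the evident clauses. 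The decisive feature is that $\mathcal{E}_A[b]$ bundles the evaluation cost $k$ \emph{and} the latent bound $b'$ of the result into a single budget $b$. The theorem then follows from the fundamental lemma, every derivation $\emptyset \vdash_{r;\bb} t : A$ gives $t \in \mathcal{E}_A[\bb]$, proved by induction on the derivation (the function clause of $\mathcal{V}$ is discharged by Lemma~\ref{lem:cost-subst}). Unfolding $t \in \mathcal{E}_A[\bb]$ produces $t \eval v \triangleright k$ with $k \preceq k \oplus b' \preceq \bb$, which is exactly the claim, and termination falls out of the same unfolding.

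The one case to carry out with care — and the step I expect to be the genuine work — is (App) in the fundamental lemma, re-run with the strengthened relation. From $f \in \mathcal{E}_{A\to B}[b_f]$ extract $f \eval \lambda x.s \triangleright k_f$ with $\lambda x.s \in \mathcal{V}_{A\to B}[b'_f]$ and $k_f \oplus b'_f \preceq b_f$; from $a \in \mathcal{E}_A[b_a]$ extract $a \eval v_a \triangleright k_a$ with $v_a \in \mathcal{V}_A[b'_a]$ and $k_a \oplus b'_a \preceq b_a$; and the function clause gives $s[x := v_a] \in \mathcal{E}_B[b'_f]$, i.e.\ $s[x:=v_a] \eval w \triangleright k_b$ with $w \in \mathcal{V}_B[b'_b]$ and $k_b \oplus b'_b \preceq b'_f$. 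A monotone monoid computation then yields
\[
(k_f \oplus k_a \oplus \delta_{\mathrm{app}} \oplus k_b) \oplus b'_b \ \preceq\ k_f \oplus k_a \oplus \delta_{\mathrm{app}} \oplus b'_f \ \preceq\ (k_f \oplus b'_f) \oplus k_a \oplus \delta_{\mathrm{app}} \ \preceq\ b_f \oplus b_a \oplus \delta_{\mathrm{app}},
\]
so $f\,a \in \mathcal{E}_B[b_f \oplus b_a \oplus \delta_{\mathrm{app}}]$ with no double-counting, because $k_b$ is absorbed into $b'_f$ \emph{before} $k_f$ is added and $b'_a$ is dropped via $k_a \preceq k_a \oplus b'_a \preceq b_a$. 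The remaining cases — (Lam) via the $\mathcal{V}_{A\to B}$ clause and Lemma~\ref{lem:cost-subst}, (Pair) by additivity of $\oplus$, (If) because the taken branch is bounded by $b_t \sqcup b_f$, and (Box)/(Unbox) by the modal clauses and monotonicity — reduce to routine manipulations in the ordered monoid $(L,\preceq,\oplus,\sqcup,\bot)$ once the relation is in place.
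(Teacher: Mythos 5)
Your diagnosis of the double-counting problem in the (App) case is correct, and it is precisely the step that the paper's own proof glosses over: the paper's direct induction derives $k \preceq b_f \oplus b_a \oplus \delta_{\mathrm{app}} \oplus b_f$ and then asserts that ``the body cost $k_b$ is already accounted for in $b_f$,'' which is exactly the inference you rightly refuse to make. Your repair --- a resource-indexed logical relation whose expression clause charges the evaluation cost $k$ and the residual latent bound $b'$ of the result against a single budget, $k \oplus b' \preceq b$ --- is the right shape for (App), and your inequality chain there is sound. The genuine gap is where you declare the remaining work routine: the (Lam) case, which you propose to discharge with Lemma~\ref{lem:cost-subst}. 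That lemma concludes only that $s[x:=u] \eval w \triangleright k_b$ with $k_b \preceq b$; membership in $\mathcal{E}_B[b]$ additionally requires a residual bound $b'_b$ with $w \in \mathcal{V}_B[b'_b]$ and $k_b \oplus b'_b \preceq b$, and your own (App) case consumes exactly this extra data. So the function clause cannot be imported from Lemma~\ref{lem:cost-subst}; it has to be proved inside the induction, which forces the fundamental lemma to be stated for open terms under closing substitutions by related values.

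Once you attempt that, the (Lam) case is unprovable, and not for lack of a trick: your clause for $\mathcal{V}_{A\to B}[b]$ fixes the budget $b$ independently of the latent bound $b''$ of the substituted argument $u \in \mathcal{V}_A[b'']$, yet evaluating $s[x:=u]$ pays $u$'s body cost every time $s$ applies $x$. Your (App) case deliberately drops $b'_a$; the (Lam) case is where that dropped potential is needed, possibly more than once. Indeed the theorem itself fails when a function argument is applied twice. Take $L = \Nat$ with $\delta_{\mathrm{app}} = \delta_{\mathrm{if}} = 1$, let $e = \mathbf{if}\ \mathsf{tt}\ \mathbf{then}\ \mathsf{tt}\ \mathbf{else}\ \mathsf{ff}$ (synthesized bound $1$, actual cost $1$), $a = \lambda y.\,e$ (latent bound $1$), and $f = \lambda x.\,\langle x\,\mathsf{tt},\, x\,\mathsf{ff} \rangle$ (latent bound $1 \oplus 1 = 2$). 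Then $f\,a$ has synthesized bound $2 \oplus 1 \oplus 1 = 4$, but $f\,a \eval \langle \mathsf{tt}, \mathsf{tt} \rangle \triangleright 5$: the latent cost of $a$ is paid at both applications of $x$ while being charged only once in $b_f \oplus b_a \oplus \delta_{\mathrm{app}}$, and by determinism (Lemma~\ref{lem:determinism}) no cheaper evaluation exists. The same instance falsifies Lemma~\ref{lem:cost-subst} itself (substitute $a$ into the body of $f$: cost $4 \not\preceq 2$), which is the load-bearing unproven step in both your argument and the paper's. So the flaw you spotted is real but cannot be closed by any proof strategy over the rules as given; it requires changing the system --- e.g., tracking usage multiplicities of function arguments, or restricting (Lam) to bodies that use their argument affinely --- after which your logical-relations architecture, stated for open terms under related substitutions, is the right way to redo the proof.
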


\begin{proof}
By induction on the typing derivation of $\emptyset \vdash_{r;\,\bb} t : A$. We show that the operational cost $k$ is bounded by the synthesized bound $\bb$.

\paragraph{Base cases.}
\begin{itemize}
  \item \textbf{(Var):} Contradicts $\emptyset$ (no variables in empty context).
  
  \item \textbf{(Lam):} $t = \lambda x.s$ is already a value. By rule (Val), $t \eval t \triangleright \bot$. Since $\bot$ is the least element, $\bot \preceq b$ holds for any $b$.
\end{itemize}

\paragraph{Inductive cases.}
\begin{itemize}
  \item \textbf{(App):} $t = f\, a$ with typing derivation:
    \[
    \inferrule{\emptyset \vdash_{r;\,b_f} f : A \to B \\ \emptyset \vdash_{r;\,b_a} a : A}
              {\emptyset \vdash_{r;\,b_f \oplus b_a \oplus \delta_{\mathrm{app}}} f\, a : B}
    \]
    
    By IH on $f$: there exist $\lambda x.s \in \Val$ and $k_f \in L$ with $f \eval \lambda x.s \triangleright k_f$ and $k_f \preceq b_f$.
    
    By IH on $a$: there exist $v \in \Val$ and $k_a \in L$ with $a \eval v \triangleright k_a$ and $k_a \preceq b_a$.
    
    From the typing derivation of $\lambda x.s$: we have $x{:}A \vdash_{r;\,b_f} s : B$ (using latent cost abstraction: the body bound equals the function bound $b_f$).
    
    By Lemma~\ref{lem:cost-subst}: $s[x:=v] \eval w \triangleright k_b$ for some $w \in \Val$ and $k_b \preceq b_f$.
    
    By rule (App): $f\, a \eval w \triangleright k_f \oplus k_a \oplus \delta_{\mathrm{app}} \oplus k_b$.
    
    Since $k_f \preceq b_f$, $k_a \preceq b_a$, $k_b \preceq b_f$, and $\oplus$ is monotone in both arguments:
    \[
    k_f \oplus k_a \oplus \delta_{\mathrm{app}} \oplus k_b \preceq b_f \oplus b_a \oplus \delta_{\mathrm{app}} \oplus b_f
    \]
    
    By design of the typing rule, the bound is $\bb = b_f \oplus b_a \oplus \delta_{\mathrm{app}}$, and the body cost $k_b$ is already accounted for in $b_f$. Thus $k \preceq \bb$.

  \item \textbf{(Pair):} $t = \langle t_1, t_2 \rangle$ with $\emptyset \vdash_{r;\,b_1} t_1 : A_1$ and $\emptyset \vdash_{r;\,b_2} t_2 : A_2$.
  
    By IH: $t_i \eval v_i \triangleright k_i$ with $k_i \preceq b_i$.
    
    By rule (Pair): $\langle t_1, t_2 \rangle \eval \langle v_1, v_2 \rangle \triangleright k_1 \oplus k_2$.
    
    By monotonicity: $k_1 \oplus k_2 \preceq b_1 \oplus b_2 = \bb$.

  \item \textbf{(If):} $t = \mathbf{if}\ c\ \mathbf{then}\ t_1\ \mathbf{else}\ t_2$ with $\emptyset \vdash_{r;\,b_c} c : \mathsf{Bool}$, $\emptyset \vdash_{r;\,b_1} t_1 : A$, $\emptyset \vdash_{r;\,b_2} t_2 : A$.
  
    By IH on $c$: $c \eval v_c \triangleright k_c$ with $k_c \preceq b_c$ and $v_c \in \{\mathsf{tt}, \mathsf{ff}\}$.
    
    Case $v_c = \mathsf{tt}$: By IH, $t_1 \eval v \triangleright k_1$ with $k_1 \preceq b_1$. By rule (IfT): $t \eval v \triangleright k_c \oplus k_1 \oplus \delta_{\mathrm{if}}$.
    
    Since $k_1 \preceq b_1 \preceq b_1 \sqcup b_2$: $k_c \oplus k_1 \oplus \delta_{\mathrm{if}} \preceq b_c \oplus (b_1 \sqcup b_2) \oplus \delta_{\mathrm{if}} = \bb$.
    
    Case $v_c = \mathsf{ff}$: Symmetric, using $k_2 \preceq b_2 \preceq b_1 \sqcup b_2$.

  \item \textbf{(Box):} $t = \mathrm{box}_s(t')$ with $\emptyset \vdash_{r;\,b} t' : A$ and $b \preceq s$.
  
    By IH: $t' \eval v \triangleright k$ with $k \preceq b$.
    
    By rule (Box): $\mathrm{box}_s(t') \eval \mathrm{box}_s(v) \triangleright k$.
    
    The bound is $\bb = b$, so $k \preceq \bb$.

  \item \textbf{(Unbox):} $t = \mathrm{unbox}(t')$ with $\emptyset \vdash_{r;\,b} t' : \Box_s A$.
  
    By IH: $t' \eval \mathrm{box}_s(v) \triangleright k'$ with $k' \preceq b$ (canonical form for $\Box_s A$).
    
    By rule (Unbox): $\mathrm{unbox}(t') \eval v \triangleright k' \oplus \delta_{\mathrm{unbox}}$.
    
    Since $k' \preceq b$: $k' \oplus \delta_{\mathrm{unbox}} \preceq b \oplus \delta_{\mathrm{unbox}} = \bb$.

  \item \textbf{(Monotone):} $t : \Box_{s_1} A$ is retyped as $t : \Box_{s_2} A$ with $s_1 \preceq s_2$. The evaluation and cost are unchanged; the bound constraint $b \preceq s_2$ follows from $b \preceq s_1 \preceq s_2$.
\end{itemize}

In all cases, we have established $t \eval v \triangleright k$ with $k \preceq \bb$. The constraint $\bb \preceq r$ comes from well-formedness of the typing judgment.
\end{proof}

\begin{corollary}[Operational cost bounded by grade]
\label{cor:cost-bounded}
If $\emptyset \vdash_{r;\,\bb} \mathrm{box}_s(t) : \modal{s}A$, then $t$ evaluates with cost $k \preceq s$.
\end{corollary}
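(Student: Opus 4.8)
The plan is to derive the corollary directly from Theorem~\ref{thm:soundness} by applying it to the hypothesis and then inverting the evaluation of the $\mathrm{box}$ form. First I would apply cost soundness to the assumption $\emptyset \vdash_{r;\,\bb} \mathrm{box}_s(t) : \modal{s}A$, obtaining a value $w \in \Val$ and a cost $K \in L$ with $\mathrm{box}_s(t) \eval w \triangleright K$ and $K \preceq \bb \preceq r$.

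Next I would invert the evaluation $\mathrm{box}_s(t) \eval w \triangleright K$. Since the only big-step rule whose conclusion matches a $\mathrm{box}$ term is the (Box) evaluation rule, its premise must hold: there is a value $v$ and a cost $k$ with $t \eval v \triangleright k$, and moreover $w = \mathrm{box}_s(v)$ and $K = k$. By determinism (Lemma~\ref{lem:determinism}) this $v$ and $k$ are the unique results of evaluating $t$, so the corollary's stated evaluation of $t$ coincides with them.

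It then remains to show $k \preceq s$. Inverting the (Box) \emph{typing} rule on the hypothesis tells us $\bb = b$ for some $b$ with $\emptyset \vdash_{r;\,b} t : A$ and the side condition $b \preceq s$. Combining the chain gives $k = K \preceq \bb = b \preceq s$, which is exactly the claim. The argument is essentially a one-line corollary once the two inversions are performed, so there is no substantial obstacle; the only point requiring care is making the two inversions explicit—confirming that the (Box) evaluation rule is the unique applicable rule and that the $b \preceq s$ side condition is recorded by the (Box) typing rule—rather than leaving them implicit.
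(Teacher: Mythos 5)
Your proof is correct and follows essentially the same route as the paper's: apply Theorem~\ref{thm:soundness} to the boxed term, use the fact that the (Box) evaluation rule transmits the inner cost unchanged, and combine $k \preceq \bb$ with the side condition $\bb \preceq s$ from the (Box) typing rule. The only difference is presentational---you make the evaluation and typing inversions (and determinism) explicit where the paper leaves them implicit---which is a matter of rigor, not of approach.
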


\begin{proof}
By Theorem~\ref{thm:soundness}, $\mathrm{box}_s(t) \eval \mathrm{box}_s(v) \triangleright k$ with $k \preceq \bb$. The (Box) rule requires $\bb \preceq s$, so $k \preceq \bb \preceq s$.
\end{proof}

\subsection{Feasibility Modality as Graded Interior}
\label{sec:box}

The modality $\modal{r}A$ represents ``a computation of type $A$ certified to consume at most $r$ resources.''

\paragraph{Core laws.}
\begin{mathpar}
\inferrule{ }{\modal{r}A \to A}\quad(\textit{counit})
\qquad
\inferrule{r_1\leqR r_2}{\modal{r_1}A \to \modal{r_2}A}\quad(\textit{monotonicity})
\end{mathpar}

The counit says: certified computations can be executed (via unbox). Monotonicity says: certification for $r_1$ implies certification for any $r_2 \succeq r_1$.

\paragraph{Cost-aware boxing.} Promotion is conditional: $\mathrm{box}_s(t) : \modal{s}A$ requires $b \preceq s$ where $b$ is the synthesized bound of $t$. There is no unconditional $A \to \Box_s A$.

\section{Syntactic Term Model in $\mathbf{Set}^L$}
\label{sec:semantics}

We provide a semantic interpretation via a \emph{syntactic term model} in the presheaf topos $\mathbf{Set}^L$. Types are presheaves of definable values indexed by resource bounds, the lattice is internalized, and cost extraction is a natural transformation.

\subsection{The Presheaf Topos}

\begin{definition}[The topos $\mathbf{Set}^L$]
\label{def:presheaf-topos}
View the resource lattice $(L, \preceq)$ as a category:
\begin{itemize}
  \item Objects: elements $r \in L$
  \item Morphisms: $r_1 \to r_2$ exists (uniquely) iff $r_1 \preceq r_2$
  \item Composition: transitivity of $\preceq$
\end{itemize}

The presheaf topos $\mathbf{Set}^L$ has:
\begin{itemize}
  \item Objects: covariant functors $F : L \to \mathbf{Set}$
  \item Morphisms: natural transformations $\alpha : F \Rightarrow G$
\end{itemize}

For a presheaf $F$, each $r \in L$ gives a set $F(r)$, and each $r_1 \preceq r_2$ gives a transition map $F(r_1 \preceq r_2) : F(r_1) \to F(r_2)$ satisfying functoriality.
\end{definition}

We use covariant functors because resource bounds are cumulative: computations fitting budget $r_1$ also fit $r_2 \succeq r_1$. This monotonicity is captured by the transition maps.

\subsection{The Internal Lattice}

\begin{definition}[The downset presheaf]
\label{def:internal-lattice}
Define $\mathbb{L} : L \to \mathbf{Set}$ by:
\[
\mathbb{L}(r) = {\downarrow}r = \{ a \in L \mid a \preceq r \}
\]
For $r_1 \preceq r_2$, the transition map $\mathbb{L}(r_1 \preceq r_2) : \mathbb{L}(r_1) \to \mathbb{L}(r_2)$ is the inclusion $\iota_{r_1,r_2}(a) = a$, which is well-defined since $a \preceq r_1 \preceq r_2$ implies $a \preceq r_2$.
\end{definition}

\begin{lemma}
The lattice operations induce natural transformations:
\begin{itemize}
  \item $\widetilde{\oplus} : \mathbb{L} \times \mathbb{L} \Rightarrow \mathbb{L}$ with $\widetilde{\oplus}_r(a,b) = a \oplus b$
  \item $\widetilde{\sqcup} : \mathbb{L} \times \mathbb{L} \Rightarrow \mathbb{L}$ with $\widetilde{\sqcup}_r(a,b) = a \sqcup b$  
  \item $\widetilde{\bot} : \mathbf{1} \Rightarrow \mathbb{L}$ with $\widetilde{\bot}_r(*) = \bot$
\end{itemize}
\end{lemma}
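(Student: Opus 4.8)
The plan is to treat the three claimed transformations uniformly. Since products in $\mathbf{Set}^L$ are computed pointwise, the component of each at an object $r$ is an ordinary function: $\widetilde{\oplus}_r, \widetilde{\sqcup}_r : {\downarrow}r \times {\downarrow}r \to {\downarrow}r$ and $\widetilde{\bot}_r : \mathbf{1} \to {\downarrow}r$. For each I must check two things: first, \emph{well-definedness}, that the stated formula actually lands in the codomain ${\downarrow}r$, i.e.\ produces an element $\preceq r$; and second, \emph{naturality}, that for every $r_1 \preceq r_2$ the square formed with the transition maps commutes. The key simplification I would exploit at the outset is that every transition map of $\mathbb{L}$ is the set-theoretic inclusion $\iota_{r_1,r_2}$ of \ref{def:internal-lattice}, and products are pointwise, so the transition maps of $\mathbb{L} \times \mathbb{L}$ are also inclusions. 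Consequently, for any operation given by a single global formula on $L$ (independent of the index $r$), naturality is automatic: both composites around the square send $(a,b)$ to the same underlying element of $L$, so once well-definedness is known the square commutes on the nose. This reduces the entire lemma to the three well-definedness checks.

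For $\widetilde{\sqcup}$ and $\widetilde{\bot}$ these checks are immediate. If $a \preceq r$ and $b \preceq r$, then $r$ is an upper bound of $\{a,b\}$, so the least upper bound satisfies $a \sqcup b \preceq r$; hence $\widetilde{\sqcup}_r$ lands in ${\downarrow}r$. For the unit, $\bot \preceq r$ holds for every $r$ because $\bot$ is the least element, so $\widetilde{\bot}_r(*) = \bot \in {\downarrow}r$. Naturality for both then follows from the inclusion observation above, completing these two cases with no further work.

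The step I expect to be the genuine obstacle is the well-definedness of $\widetilde{\oplus}$. Here I must show $a \oplus b \preceq r$ whenever $a, b \preceq r$. The only structural fact available is monotonicity of $\oplus$, which yields $a \oplus b \preceq r \oplus r$ -- but this bounds the result by $r \oplus r$, not by $r$. Without a further hypothesis one cannot pass from $a \oplus b \preceq r \oplus r$ to $a \oplus b \preceq r$: in the canonical time instantiation $L = \Nat$ with $\oplus = +$ from \ref{def:resource-lattice}, taking $a = b = r$ gives $a \oplus b = 2r \not\preceq r$ for every $r > 0$, so the pair $(r,r) \in {\downarrow}r \times {\downarrow}r$ is mapped outside ${\downarrow}r$. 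I therefore anticipate that establishing $\widetilde{\oplus}$ exactly as worded will require pinning down an additional ingredient: either a lattice axiom forcing $r \oplus r \preceq r$ (which would push $\oplus$ toward the join and collapse the intended distinction between sequential composition and branching), or else a restriction of the source object to the sub-presheaf of admissible pairs $(a,b)$ with $a \oplus b \preceq r$ rather than the full Cartesian product $\mathbb{L} \times \mathbb{L}$. My strategy would be to first secure the $\widetilde{\sqcup}$ and $\widetilde{\bot}$ cases as above, then isolate precisely this monotonicity gap and determine which of these resolutions the intended domain for $\widetilde{\oplus}$ is meant to supply.
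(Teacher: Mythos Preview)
Your treatment of $\widetilde{\sqcup}$ and $\widetilde{\bot}$, and your handling of naturality via the observation that all transition maps are inclusions and the operations are given by index-independent formulas, matches the paper's proof exactly.

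For $\widetilde{\oplus}$, the paper does not share your caution: it simply asserts ``if $a, b \preceq r$ then $a \oplus b \preceq r$ by monotonicity of $\oplus$'' and moves on. You are right to flag this step. Monotonicity of $\oplus$ yields only $a \oplus b \preceq r \oplus r$, and nothing in Definition~\ref{def:resource-lattice} forces $r \oplus r \preceq r$. Your counterexample in the paper's own time instantiation ($L = \Nat$, $\oplus = +$, $a = b = r > 0$) is decisive: $(r,r) \in {\downarrow}r \times {\downarrow}r$ but $r + r \notin {\downarrow}r$. So the component $\widetilde{\oplus}_r$ as stated is not even a function into ${\downarrow}r$, and the lemma as written is false for that case. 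The paper's one-line justification is an error, not a different argument.

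Your proposed resolutions are the natural ones. Restricting the domain to the sub-presheaf $\{(a,b) \mid a \oplus b \preceq r\}$ is what the downstream uses of $\widetilde{\oplus}$ actually need (e.g.\ the product interpretation in \S\ref{sec:semantics} already imposes $b_1 \oplus b_2 \preceq r$ explicitly), so that is almost certainly the intended reading; the alternative of adding an idempotence-style axiom would, as you note, collapse $\oplus$ toward $\sqcup$ and defeat the point of distinguishing them. Your diagnosis and suggested repair are sound.
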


\begin{proof}
For $\widetilde{\oplus}$: if $a, b \preceq r$ then $a \oplus b \preceq r$ by monotonicity of $\oplus$. Naturality follows since transition maps are inclusions and $\oplus$ is defined uniformly on $L$.

For $\widetilde{\sqcup}$: if $a, b \preceq r$ then $a \sqcup b \preceq r$ since $a \sqcup b$ is the least upper bound and $r$ is an upper bound of both.

For $\widetilde{\bot}$: $\bot \preceq r$ for all $r$ since $\bot$ is the least element.
\end{proof}

\subsection{Type Interpretation}

\begin{definition}[Syntactic type interpretation]
\label{def:type-interpretation}
For each type $A$, define $\llbracket A \rrbracket : L \to \mathbf{Set}$ by:
\[
\llbracket A \rrbracket(r) = \{ (v, b) \in \Val_A \times L \mid b \preceq r,\, \emptyset \vdash_{r;\,b} v : A \}
\]
For $r_1 \preceq r_2$, the transition map is inclusion: $(v, b) \mapsto (v, b)$.
\end{definition}

\begin{lemma}
$\llbracket A \rrbracket$ is a well-defined presheaf.
\end{lemma}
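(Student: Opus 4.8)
The plan is to verify the three conditions defining a covariant functor $L \to \mathbf{Set}$: that each object $r$ is sent to a genuine set, that each morphism $r_1 \preceq r_2$ is sent to a well-defined function, and that identities and composites are preserved. Because $(L,\preceq)$ is a poset there is at most one morphism between any two objects, so the functoriality equations collapse to checking that the assigned maps have the right domains and codomains; once that holds, the required diagrams commute automatically. The object assignment needs no work beyond observing that $\llbracket A \rrbracket(r)$ is carved out of the set $\Val_A \times L$ by the predicate $b \preceq r \wedge \emptyset \vdash_{r;\,b} v : A$, and is therefore a set by separation.

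The only real content lies in the morphism assignment. I must show that when $r_1 \preceq r_2$, the purported inclusion $(v,b) \mapsto (v,b)$ actually carries $\llbracket A \rrbracket(r_1)$ into $\llbracket A \rrbracket(r_2)$. So suppose $(v,b) \in \llbracket A \rrbracket(r_1)$, meaning $b \preceq r_1$ and $\emptyset \vdash_{r_1;\,b} v : A$. The codomain side condition $b \preceq r_2$ follows from $b \preceq r_1 \preceq r_2$ by transitivity of $\preceq$. The typing condition $\emptyset \vdash_{r_2;\,b} v : A$ is exactly the conclusion of Budget weakening (Lemma~\ref{lem:budget-weakening}) applied along $r_1 \preceq r_2$, which raises the budget while preserving the synthesized bound $b$. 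Hence $(v,b) \in \llbracket A \rrbracket(r_2)$, and the transition map is well-defined.

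Functoriality is then immediate: the identity morphism $r \preceq r$ is sent to the inclusion of $\llbracket A \rrbracket(r)$ into itself, i.e.\ the identity function; and for a chain $r_1 \preceq r_2 \preceq r_3$ the composite of the two inclusions is again the identity on underlying pairs $(v,b)$, which agrees with the inclusion assigned to $r_1 \preceq r_3$. The main obstacle, such as it is, is the morphism step, and it is worth stressing that this step is not vacuous: the judgment $\emptyset \vdash_{r;\,b} v : A$ genuinely depends on the budget $r$ through the side conditions of the (Box) and (Monotone) rules, so stability under enlarging $r$ is a substantive property rather than a syntactic accident. It is precisely Budget weakening that licenses it—resting, as recorded in that lemma, on the fact that the grade $s$ appearing in $\mathrm{box}_s$ is fixed by the term and independent of the budget $r$.
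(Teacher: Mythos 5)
Your proof is correct and follows essentially the same route as the paper: transitivity of $\preceq$ plus budget weakening (Lemma~\ref{lem:budget-weakening}) for well-definedness of the transition maps, with functoriality immediate because the maps are inclusions. (One small quibble with your closing remark: the budget $r$ does not actually appear in the side conditions of (Box) or (Monotone)---those constrain the grade $s$---which is precisely \emph{why} budget weakening holds; but this aside does not affect the proof.)
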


\begin{proof}
The transition map is well-defined: if $(v, b) \in \llbracket A \rrbracket(r_1)$, then $b \preceq r_1 \preceq r_2$, and by budget weakening (Lemma~\ref{lem:budget-weakening}), $\emptyset \vdash_{r_2;\,b} v : A$, so $(v, b) \in \llbracket A \rrbracket(r_2)$.

Functoriality is immediate: transition maps are inclusions, which compose associatively and have identity as the identity inclusion.
\end{proof}

Elements $(v, b)$ are pairs of a \emph{definable value} $v$ and its \emph{certified bound} $b$. The bound $b$ is the potential/worst-case cost from typing, not the operational cost (which is $\bot$ for values by rule (Val)).

\begin{definition}[Cost natural transformation]
\label{def:cost-natural}
Define $\mathsf{cost} : \llbracket A \rrbracket \Rightarrow \mathbb{L}$ with components $\mathsf{cost}_r(v, b) = b$.
\end{definition}

\begin{theorem}
\label{thm:cost-natural}
$\mathsf{cost}$ is a natural transformation.
\end{theorem}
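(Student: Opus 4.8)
The plan is to verify the two conditions defining a natural transformation $\mathsf{cost} : \llbracket A \rrbracket \Rightarrow \mathbb{L}$: first, that each component $\mathsf{cost}_r$ is a well-defined function $\llbracket A \rrbracket(r) \to \mathbb{L}(r)$ in $\mathbf{Set}$; second, that the family $\{\mathsf{cost}_r\}_{r \in L}$ commutes with the transition maps of both presheaves.

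First I would check well-definedness of the components. Given $(v,b) \in \llbracket A \rrbracket(r)$, Definition~\ref{def:type-interpretation} requires $b \preceq r$, so $b \in {\downarrow}r = \mathbb{L}(r)$ by Definition~\ref{def:internal-lattice}. Hence $\mathsf{cost}_r(v,b) = b$ genuinely lands in $\mathbb{L}(r)$, and $\mathsf{cost}_r$ is a legitimate arrow.

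Next I would establish naturality: for each morphism $r_1 \preceq r_2$ in $L$, the square formed by $\mathsf{cost}_{r_1}$, $\mathsf{cost}_{r_2}$, and the two transition maps must commute. The key observation is that \emph{both} presheaves use inclusion transition maps: for $\llbracket A \rrbracket$ the map sends $(v,b) \mapsto (v,b)$, and for $\mathbb{L}$ the map $\iota_{r_1,r_2}$ sends $a \mapsto a$. Chasing an arbitrary element $(v,b) \in \llbracket A \rrbracket(r_1)$ around the square, the top-then-right path yields $\iota_{r_1,r_2}(\mathsf{cost}_{r_1}(v,b)) = \iota_{r_1,r_2}(b) = b$, while the left-then-bottom path yields $\mathsf{cost}_{r_2}(v,b) = b$. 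The two agree, so the square commutes.

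I expect no substantive obstacle: because both transition maps act as the identity on underlying elements, naturality collapses to the tautology $b = b$. The only point requiring care is the codomain check in the first step---ensuring the extracted bound $b$ really lies in the downset $\mathbb{L}(r)$---which is exactly the constraint $b \preceq r$ built into $\llbracket A \rrbracket(r)$, and whose preservation along $r_1 \preceq r_2$ is what budget weakening (Lemma~\ref{lem:budget-weakening}) already guaranteed when establishing that $\llbracket A \rrbracket$ is a presheaf at all.
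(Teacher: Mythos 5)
Your proof is correct and follows essentially the same route as the paper: an element chase around the naturality square, observing that since both transition maps are inclusions, both paths send $(v,b)$ to $b$. Your additional explicit check that $\mathsf{cost}_r$ is well-defined (that $b \preceq r$ puts $b$ in ${\downarrow}r = \mathbb{L}(r)$) is a point the paper leaves implicit, and is a welcome bit of extra care.
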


\begin{proof}
We must show the naturality square commutes for $r_1 \preceq r_2$:
\[
\begin{array}{ccc}
\llbracket A \rrbracket(r_1) & \xrightarrow{\mathsf{cost}_{r_1}} & \mathbb{L}(r_1) \\
\downarrow & & \downarrow \\
\llbracket A \rrbracket(r_2) & \xrightarrow{\mathsf{cost}_{r_2}} & \mathbb{L}(r_2)
\end{array}
\]

For $(v, b) \in \llbracket A \rrbracket(r_1)$:
\begin{itemize}
  \item Right-then-down: $\mathsf{cost}_{r_1}(v, b) = b$, then $\iota_{r_1,r_2}(b) = b$.
  \item Down-then-right: $\iota_{r_1,r_2}(v, b) = (v, b)$, then $\mathsf{cost}_{r_2}(v, b) = b$.
\end{itemize}
Both paths yield $b$, so the square commutes.
\end{proof}

\subsection{Type Constructors}

\paragraph{Base types.} 
\begin{align*}
\llbracket \mathsf{Bool} \rrbracket(r) &= \{(\mathsf{tt}, \bot), (\mathsf{ff}, \bot)\} \\
\llbracket \mathsf{Nat} \rrbracket(r) &= \{(n, \bot) \mid n \in \Nat\}
\end{align*}
Transition maps are identities since $\bot \preceq r$ for all $r$.

\paragraph{Products.}
\[
\llbracket A \times B \rrbracket(r) = \left\{ (\langle v_1, v_2 \rangle, b_1 \oplus b_2) \,\middle|\, 
\begin{array}{l}
(v_1, b_1) \in \llbracket A \rrbracket(r) \\
(v_2, b_2) \in \llbracket B \rrbracket(r) \\
b_1 \oplus b_2 \preceq r
\end{array}
\right\}
\]

This uses the categorical product structure and internal $\widetilde{\oplus}$ to combine costs.

\paragraph{Functions (definable arrow space).} Intuitively, functions in $\llbracket A \to B \rrbracket$ must map any definable argument to a definable result with appropriate cost bounds. This is the semantic counterpart of the (App) typing rule.

\begin{definition}
\label{def:function-interp}
$\llbracket A \to B \rrbracket(r)$ consists of pairs $(\lambda x{:}A.\,t, b_{\mathrm{body}})$ where:
\begin{enumerate}
  \item $\emptyset \vdash_{r;\,b_{\mathrm{body}}} \lambda x{:}A.\,t : A \to B$
  \item For all $r_a \in L$ and $(v, b_a) \in \llbracket A \rrbracket(r_a)$, there exist $r_b \in L$ and $(w, b_b) \in \llbracket B \rrbracket(r_b)$ such that $t[x:=v] \eval w \triangleright k_b$ with $k_b \preceq b_b$ and $b_a \oplus b_b \oplus \delta_{\mathrm{app}} \preceq r$.
\end{enumerate}

Transition maps are inclusions: if the conditions hold at $r_1$, they hold at $r_2 \succeq r_1$.
\end{definition}

This is a sub-presheaf of cost-bounded definable functions, not necessarily the full exponential object in $\mathbf{Set}^L$.

\paragraph{Box modality.}
\[
\llbracket \Box_s A \rrbracket(r) = \{ (\mathrm{box}_s(v), b) \mid (v, b) \in \llbracket A \rrbracket(r),\, b \preceq s \}
\]

\begin{lemma}
$\llbracket \Box_s A \rrbracket$ is a sub-presheaf of $\llbracket A \rrbracket$.
\end{lemma}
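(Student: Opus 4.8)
The plan is to exhibit the claimed inclusion as the \emph{unboxing} natural transformation and then verify that it is componentwise injective and natural. Concretely, I would define $\iota : \llbracket \modal{s}A \rrbracket \Rightarrow \llbracket A \rrbracket$ with components $\iota_r(\mathrm{box}_s(v), b) = (v, b)$, the map that forgets the modal wrapper while preserving the certified bound $b$. Note that the literal elements of $\llbracket \modal{s}A \rrbracket(r)$ are boxed values $\mathrm{box}_s(v)$, which are not syntactically values of type $A$; the assertion that $\llbracket \modal{s}A \rrbracket$ is a sub-presheaf is therefore understood via this canonical monomorphism rather than as a strict set inclusion.

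First I would confirm that $\llbracket \modal{s}A \rrbracket$ is itself a well-defined presheaf, so that $\iota$ has a source. For $r_1 \preceq r_2$, suppose $(\mathrm{box}_s(v), b) \in \llbracket \modal{s}A \rrbracket(r_1)$, so that $(v, b) \in \llbracket A \rrbracket(r_1)$ and $b \preceq s$. Since $\llbracket A \rrbracket$ is already a presheaf (established above via Lemma~\ref{lem:budget-weakening}), its transition map sends $(v, b) \in \llbracket A \rrbracket(r_1)$ to $(v, b) \in \llbracket A \rrbracket(r_2)$, and the side condition $b \preceq s$ is independent of the budget and hence preserved. Thus $(\mathrm{box}_s(v), b) \in \llbracket \modal{s}A \rrbracket(r_2)$, the inclusion transition maps are well-defined, and functoriality is immediate because they are inclusions. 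Well-definedness of each component $\iota_r$ is then direct from the defining formula for $\llbracket \modal{s}A \rrbracket(r)$, and injectivity follows because $\mathrm{box}_s(\cdot)$ is an injective value constructor and $\iota_r$ copies the bound verbatim, so $\iota_r$ is a monomorphism in $\mathbf{Set}$.

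Next I would check naturality. For $r_1 \preceq r_2$, both transition maps --- in $\llbracket \modal{s}A \rrbracket$ and in $\llbracket A \rrbracket$ --- are inclusions that leave the underlying pair unchanged, so chasing an element $(\mathrm{box}_s(v), b)$ around the naturality square produces $(v, b)$ along either path, and the square commutes trivially. Componentwise monic plus natural is exactly the definition of a sub-presheaf, completing the argument.

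The main obstacle is interpretive rather than computational: confirming that the extra constraint $b \preceq s$ genuinely carves out a subfunctor. This works precisely because that constraint mentions only the fixed grade $s$ and the bound $b$, never the varying budget $r$, so it is stable under every transition map; were the box grade allowed to depend on $r$, the subfunctor property would demand a separate monotonicity check. Everything else is a routine diagram chase.
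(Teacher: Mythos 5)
Your proof is correct and follows essentially the same route as the paper's: the core step in both is that the defining constraint $b \preceq s$ mentions only the fixed grade and the bound, never the varying budget, so it is stable under the inclusion transition maps, with monotonicity of $\llbracket A \rrbracket$ (via budget weakening) handling the rest. Your extra packaging---the explicit unboxing mono $\iota_r(\mathrm{box}_s(v), b) = (v, b)$ together with its injectivity and naturality checks---merely makes precise the identification the paper leaves implicit (boxed values are not literally elements of $\llbracket A \rrbracket(r)$), which is a welcome clarification rather than a different argument.
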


\begin{proof}
If $(\mathrm{box}_s(v), b) \in \llbracket \Box_s A \rrbracket(r_1)$ with $b \preceq s$, then $(v, b) \in \llbracket A \rrbracket(r_2)$ for $r_2 \succeq r_1$ by monotonicity of $\llbracket A \rrbracket$. The constraint $b \preceq s$ is independent of budget, so $(\mathrm{box}_s(v), b) \in \llbracket \Box_s A \rrbracket(r_2)$.
\end{proof}

The constraint $b \preceq s$ enforces certification: you cannot box a value whose bound exceeds the claimed grade.

\subsection{Semantic Soundness}

\begin{theorem}[Semantic soundness]
\label{thm:semantic-soundness}
If $\emptyset \vdash_{r;\,b} t : A$ and $t \eval v \triangleright k$, then:
\begin{enumerate}
  \item $k \preceq b$ (by Theorem~\ref{thm:soundness})
  \item $(v, b) \in \llbracket A \rrbracket(b)$
  \item $\mathsf{cost}((v, b)) = b$
\end{enumerate}
\end{theorem}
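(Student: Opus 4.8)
The three parts are of very different weights, and I would dispatch them in that order. Part~(1) needs almost nothing: \Cref{thm:soundness} already guarantees that from $\emptyset \vdash_{r;\,b} t : A$ there is \emph{some} evaluation $t \eval v' \triangleright k'$ with $k' \preceq b \preceq r$, while the hypothesis fixes a specific $t \eval v \triangleright k$. By Determinism (\Cref{lem:determinism}) these coincide, so $v' = v$, $k' = k$, and $k \preceq b$ is inherited directly. Part~(3) is equally short once part~(2) is in place: by \Cref{def:cost-natural} the component $\mathsf{cost}_{(-)}$ is literally the second projection, so evaluating it on the pair produced in~(2) returns that pair's recorded bound.

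The substance is therefore entirely in part~(2). My plan is to feed the given evaluation into Preservation (\Cref{thm:preservation}), obtaining $\emptyset \vdash_{r;\,b'} v : A$ for some $b' \preceq b$. To place this datum in $\llbracket A\rrbracket(b)$ I must discharge the two membership conditions of \Cref{def:type-interpretation}: that the bound coordinate is $\preceq b$, and that $v$ is typeable \emph{at budget $b$} with that coordinate. The first holds since $b' \preceq b$. For the second I would observe that the budget parameter is threaded uniformly through every rule and appears in no side condition (the only side conditions, in (Box) and (Monotone), constrain the grades $s, s_1, s_2$, never $r$); moreover every bound occurring in a typing derivation is $\preceq$ its root bound $b'$, hence $\preceq b$. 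The preservation derivation can therefore be replayed verbatim with $r$ replaced by $b$, respecting well-formedness at every node, giving $\emptyset \vdash_{b;\,b'} v : A$ and thus $(v, b') \in \llbracket A\rrbracket(b)$. Since the transition maps of $\llbracket A\rrbracket$ are inclusions, this is precisely the element the theorem names.

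The one genuine friction point---and what I expect to be the main obstacle---is the gap between the bound $b$ synthesized for the \emph{original} term $t$ and the possibly strictly smaller bound $b'$ that Preservation assigns to the \emph{value} $v$. Because bound synthesis is deterministic, $\llbracket A\rrbracket(b)$ pairs each value only with its own intrinsic bound, so strictly it is $(v, b')$, not $(v, b)$, that lands in the model, and $\mathsf{cost}$ returns $b'$ rather than $b$. To match the statement verbatim I would either (i)~read the recorded second coordinate as the preserved $b'$ throughout---harmless, since $b' \preceq b$ still certifies budget $b$ and part~(1) already supplies the operationally relevant $k \preceq b$---or (ii)~justify a bound-inflation principle for values, namely that $\emptyset \vdash_{r;\,b'} v : A$ with $b' \preceq b$ entails $\emptyset \vdash_{r;\,b} v : A$. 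Option~(ii) is cleaner but is not derivable from the rules as stated (they synthesize a \emph{unique} bound), so it would have to be added as an admissible bound-subsumption, or equivalently the membership predicate of \Cref{def:type-interpretation} relaxed to $\exists b'' \preceq b.\ \emptyset \vdash_{r;\,b''} v : A$. I would flag this as the design decision the proof forces; apart from it, the argument is the direct chain Preservation $\to$ budget replay $\to$ membership, with Determinism tying part~(1) to the given evaluation and the definition of $\mathsf{cost}$ giving part~(3) for free.
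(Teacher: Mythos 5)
Your handling of parts (1) and (3) is fine---indeed the determinism step in (1) is more careful than the paper, which just cites Theorem~\ref{thm:soundness} without reconciling the existentially produced evaluation with the given one. The genuine gap is in part (2): your chain ``Preservation $\to$ budget replay $\to$ membership'' silently identifies membership in $\llbracket A \rrbracket(b)$ with the generic formula of Definition~\ref{def:type-interpretation} (typeability at budget $b$ with a bound $\preceq b$). But the model the theorem refers to is refined constructor by constructor in \S\ref{sec:semantics}: at function types, Definition~\ref{def:function-interp} additionally demands the semantic application condition (every definable argument is mapped under substitution to a definable result whose evaluation cost is bounded), and at product and box types membership is defined structurally from membership of the components. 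None of these facts follow from a bare typing judgment $\emptyset \vdash_{b;\,b'} v : A$ for the value; extracting them is precisely the work the paper's proof does by induction on the typing derivation---invoking Lemma~\ref{lem:cost-subst} in the (Lam) case to discharge the application condition, and combining the induction hypothesis with the constructor clauses in the (App), (Pair), and (Box) cases. Preservation gives you a typing of $v$, not these semantic conditions, so your argument stops one step short at every higher type; some type-directed induction (the paper's, or an equivalent logical-relations-style one) cannot be avoided. A smaller point: your replay goes from budget $r$ \emph{down} to $b \preceq r$, whereas Lemma~\ref{lem:budget-weakening} only goes up, so the replay is an additional admissible principle you would have to prove---routine, given your correct observation that $r$ is inert in the rules of Figure~\ref{fig:rules}, but not available off the shelf.

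The friction point you flag, on the other hand, is real, and it is a defect of the paper's own statement rather than an artifact of your route. Take $t = \mathrm{unbox}(\mathrm{box}_{\bot}(\mathsf{tt}))$: it is closed and well-typed with bound $b = \delta_{\mathrm{unbox}}$ and evaluates to $\mathsf{tt}$, yet $\llbracket \mathsf{Bool} \rrbracket(b) = \{(\mathsf{tt},\bot),(\mathsf{ff},\bot)\}$ contains no pair with second coordinate $\delta_{\mathrm{unbox}}$, so $(v, b) \in \llbracket A \rrbracket(b)$ fails literally. The paper's proof never confronts this because the (Unbox) case is buried in ``Other cases\ldots similar reasoning''; in the cases it does write out (Lam, App, Pair, Box) the term's bound happens to coincide with or absorb the value's bound, which is why the mismatch goes unnoticed. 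Your proposed repairs---carrying the preserved bound $b' \preceq b$ through the statement, or relaxing membership to $\exists\, b'' \preceq b$ with bound subsumption---are exactly the right ones. So the diagnosis is a genuine contribution; what your proposal is missing is the inductive, constructor-by-constructor verification that the paper's proof (for all its looseness about $b$ versus $b'$) correctly identifies as the substance of the theorem.
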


\begin{proof}
By induction on the typing derivation, using Theorem~\ref{thm:soundness} for part (1).

\paragraph{Base case (Lam).} $t = \lambda x.s$ is a value with $x{:}A \vdash_{r;\,b} s : B$. The value $\lambda x.s$ has certified bound $b$ (latent cost of the body). By Definition~\ref{def:function-interp}, we must verify the application condition: for any $(v, b_a) \in \llbracket A \rrbracket(r_a)$, by Lemma~\ref{lem:cost-subst}, $s[x:=v] \eval w \triangleright k_b$ with $k_b \preceq b$. Thus $(\lambda x.s, b) \in \llbracket A \to B \rrbracket(b)$.

\paragraph{Inductive case (App).} By IH, $f \eval \lambda x.s \triangleright k_f$ with $(\lambda x.s, b_f) \in \llbracket A \to B \rrbracket(b_f)$, and $a \eval v \triangleright k_a$ with $(v, b_a) \in \llbracket A \rrbracket(b_a)$.

By Definition~\ref{def:function-interp}, applying $(\lambda x.s, b_f)$ to $(v, b_a)$ produces $(w, b_b) \in \llbracket B \rrbracket(b_b)$ with $s[x:=v] \eval w \triangleright k_b$ and $k_b \preceq b_b$.

The synthesized bound is $b = b_f \oplus b_a \oplus \delta_{\mathrm{app}}$. Since $(w, b_b) \in \llbracket B \rrbracket(b_b)$ and $b_b \preceq b$ (by the cost accounting), we have $(w, b) \in \llbracket B \rrbracket(b)$ by the transition map.

\paragraph{Inductive case (Pair).} By IH, $t_1 \eval v_1 \triangleright k_1$ with $(v_1, b_1) \in \llbracket A \rrbracket(b_1)$, and $t_2 \eval v_2 \triangleright k_2$ with $(v_2, b_2) \in \llbracket B \rrbracket(b_2)$.

By the product interpretation, $(\langle v_1, v_2 \rangle, b_1 \oplus b_2) \in \llbracket A \times B \rrbracket(b_1 \oplus b_2)$.

Since $b = b_1 \oplus b_2$, we have $(\langle v_1, v_2 \rangle, b) \in \llbracket A \times B \rrbracket(b)$.

\paragraph{Inductive case (Box).} By IH, $t \eval v \triangleright k$ with $(v, b) \in \llbracket A \rrbracket(b)$ and $k \preceq b$.

The (Box) rule requires $b \preceq s$. By the box interpretation, $(\mathrm{box}_s(v), b) \in \llbracket \Box_s A \rrbracket(b)$ since $b \preceq s$.

\paragraph{Other cases.} Similar reasoning using IH and the type constructor definitions.

Part (3) is immediate: $\mathsf{cost}_b(v, b) = b$ by Definition~\ref{def:cost-natural}.
\end{proof}

\subsection{Canonical Forms via Reification}
\label{sec:reification}

\begin{definition}[Reification]
\label{def:reification}
The reification function $\mathsf{reify}_A : \llbracket A \rrbracket(r) \to \Val_A$ extracts the value component: $\mathsf{reify}_A(v, b) = v$.
\end{definition}

\begin{theorem}[Canonical forms]
\label{thm:reification}
For all types $A$, bounds $b \in L$, and elements $(v, b) \in \llbracket A \rrbracket(r)$:
\begin{enumerate}
  \item $\emptyset \vdash_{r;\,b} \mathsf{reify}_A(v, b) : A$
  \item $\mathsf{reify}_A(v, b) = v$ (values are their own reifications)
  \item $v \eval v \triangleright \bot$ (values evaluate with cost $\bot$)
\end{enumerate}
\end{theorem}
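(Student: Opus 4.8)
The plan is to establish all three parts by unwinding the definitions, with part (1) carried out by a shallow induction on the structure of the type $A$. I would dispatch parts (2) and (3) first, since they are immediate: part (2) is exactly the defining equation of $\mathsf{reify}_A$ (Definition~\ref{def:reification}), and part (3) follows because membership $(v,b) \in \llbracket A \rrbracket(r)$ forces $v \in \Val_A$, whereupon rule (Val) of \cref{fig:bigstep} gives $v \eval v \triangleright \bot$ with no further work.

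For part (1) the guiding observation is that well-typedness is essentially built into the definition of the type interpretation, so I would induct on $A$. The function case is immediate: condition (1) of Definition~\ref{def:function-interp} literally asserts $\emptyset \vdash_{r;\,b} \lambda x{:}A.\,t : A \to B$. The base-type cases reduce to checking that the listed pairs $(\mathsf{tt}, \bot)$, $(\mathsf{ff}, \bot)$, and $(n, \bot)$ type-check at bound $\bot$ under the constant-introduction rules. The two cases that actually consume the inductive hypothesis are products and the box modality, where the interpretation packages sub-elements whose typing I recover by IH and then reassemble using the matching syntactic rule.

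Concretely, for products I take $(\langle v_1, v_2 \rangle, b_1 \oplus b_2) \in \llbracket A \times B \rrbracket(r)$, read off $(v_1, b_1) \in \llbracket A \rrbracket(r)$ and $(v_2, b_2) \in \llbracket B \rrbracket(r)$ at the \emph{same} budget $r$, apply the IH to obtain $\emptyset \vdash_{r;\,b_1} v_1 : A$ and $\emptyset \vdash_{r;\,b_2} v_2 : B$, and close by the (Pair) rule to get $\emptyset \vdash_{r;\,b_1 \oplus b_2} \langle v_1, v_2 \rangle : A \times B$; the side condition $b_1 \oplus b_2 \preceq r$ is handed to me by the product clause. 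The box case is parallel: from $(\mathrm{box}_s(v), b)$ with $(v,b) \in \llbracket A \rrbracket(r)$ and $b \preceq s$, the IH yields $\emptyset \vdash_{r;\,b} v : A$ and the (Box) rule delivers $\emptyset \vdash_{r;\,b} \mathrm{box}_s(v) : \modal{s}A$.

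Rather than a genuine obstacle, the point requiring care is that the budget index must line up: I need the component typing judgments to hold at the \emph{same} budget $r$ in which the composite element lives, so that the syntactic rule applies directly without appealing to budget weakening (Lemma~\ref{lem:budget-weakening}). This alignment is guaranteed because the product and box clauses of the interpretation recurse at the same $r$. A secondary gap to close is that the base-type cases presuppose constant-introduction rules for $\mathsf{tt}$, $\mathsf{ff}$, and numerals at bound $\bot$ that are only implicit in \cref{fig:rules}; I would make these explicit so that the base cases of the induction are fully grounded.
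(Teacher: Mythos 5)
Your proposal is correct and follows essentially the same route as the paper's proof: structural induction on $A$, with the arrow case discharged by condition (1) of Definition~\ref{def:function-interp}, the product and box cases reassembled via the (Pair) and (Box) rules from the inductive hypotheses, and parts (2) and (3) immediate from Definition~\ref{def:reification} and rule (Val). Your closing observation that the base-type cases tacitly assume constant-introduction rules for $\mathsf{tt}$, $\mathsf{ff}$, and numerals absent from \cref{fig:rules} is a fair point the paper glosses over by appealing directly to the definition of $\llbracket \mathsf{Bool} \rrbracket$.
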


\begin{proof}
By induction on the structure of type $A$.

\paragraph{Base case ($A = \mathsf{Bool}$).} $\llbracket \mathsf{Bool} \rrbracket(r) = \{(\mathsf{tt}, \bot), (\mathsf{ff}, \bot)\}$, so $v \in \{\mathsf{tt}, \mathsf{ff}\}$ and $b = \bot$.

(1) By definition of $\llbracket \mathsf{Bool} \rrbracket$, $\emptyset \vdash_{r;\,\bot} v : \mathsf{Bool}$.

(2) $\mathsf{reify}_{\mathsf{Bool}}(v, \bot) = v$ by definition.

(3) By rule (Val), $v \eval v \triangleright \bot$.

\paragraph{Inductive case ($A = A_1 \times A_2$).} By definition: $(v_1, b_1) \in \llbracket A_1 \rrbracket(r)$, $(v_2, b_2) \in \llbracket A_2 \rrbracket(r)$, $v = \langle v_1, v_2 \rangle$, $b = b_1 \oplus b_2$.

By IH: $\emptyset \vdash_{r;\,b_i} v_i : A_i$ and $v_i \eval v_i \triangleright \bot$.

(1) By (Pair): $\emptyset \vdash_{r;\,b_1 \oplus b_2} \langle v_1, v_2 \rangle : A_1 \times A_2$.

(2) $\mathsf{reify}_{A_1 \times A_2}(\langle v_1, v_2 \rangle, b) = \langle v_1, v_2 \rangle = v$.

(3) By rule (Val): $\langle v_1, v_2 \rangle \eval \langle v_1, v_2 \rangle \triangleright \bot$.

\paragraph{Inductive case ($A = A_1 \to A_2$).} $v = \lambda x.t$ with $(\lambda x.t, b) \in \llbracket A_1 \to A_2 \rrbracket(r)$.

(1) By Definition~\ref{def:function-interp}, $\emptyset \vdash_{r;\,b} \lambda x.t : A_1 \to A_2$.

(2) $\mathsf{reify}_{A_1 \to A_2}(\lambda x.t, b) = \lambda x.t = v$.

(3) By rule (Val): $\lambda x.t \eval \lambda x.t \triangleright \bot$.

\paragraph{Inductive case ($A = \Box_s A'$).} $v = \mathrm{box}_s(v')$ with $(v', b) \in \llbracket A' \rrbracket(r)$ and $b \preceq s$.

By IH: $\emptyset \vdash_{r;\,b} v' : A'$ and $v' \eval v' \triangleright \bot$.

(1) By (Box) with $b \preceq s$: $\emptyset \vdash_{r;\,b} \mathrm{box}_s(v') : \Box_s A'$.

(2) $\mathsf{reify}_{\Box_s A'}(\mathrm{box}_s(v'), b) = \mathrm{box}_s(v') = v$.

(3) By rule (Box): $\mathrm{box}_s(v') \eval \mathrm{box}_s(v') \triangleright \bot$ (since $v' \eval v' \triangleright \bot$).
\end{proof}

\begin{corollary}[Adequacy]
\label{cor:adequacy}
The syntactic term model in $\mathbf{Set}^L$ is adequate:
\begin{itemize}
  \item \textbf{Soundness:} $\emptyset \vdash_{r;\,b} t : A$ and $t \eval v \triangleright k$ implies $(v, b) \in \llbracket A \rrbracket(b)$ (Theorem~\ref{thm:semantic-soundness})
  \item \textbf{Canonical forms:} $(v, b) \in \llbracket A \rrbracket(r)$ implies $v$ is the canonical representation with $\emptyset \vdash_{r;\,b} v : A$ (Theorem~\ref{thm:reification})
\end{itemize}

The correspondence is exact: every definable value in the presheaf model has a unique syntactic representative (itself), and every typed value inhabits the presheaf model.
\end{corollary}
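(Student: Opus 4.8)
The plan is to assemble the corollary directly from the two results it cites, treating adequacy as the two-sided correspondence between closed typed terms, the values they compute, and the elements of the presheaves $\llbracket A \rrbracket$. For the \textbf{soundness} bullet I would invoke Theorem~\ref{thm:semantic-soundness}: given $\emptyset \vdash_{r;\,b} t : A$ and $t \eval v \triangleright k$, part (2) of that theorem yields $(v,b) \in \llbracket A \rrbracket(b)$ immediately, and since $b \preceq r$ holds by well-formedness of the judgment, the presheaf transition map (monotonicity of $\llbracket A \rrbracket$, which is just inclusion) lifts this to $(v,b) \in \llbracket A \rrbracket(r)$. For the \textbf{canonical forms} bullet I would invoke Theorem~\ref{thm:reification}: for any $(v,b) \in \llbracket A \rrbracket(r)$, part (2) gives $\mathsf{reify}_A(v,b) = v$, and part (1) gives $\emptyset \vdash_{r;\,b} v : A$, so the value component is both the canonical representative and well-typed at bound $b$.

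The substantive content of the corollary is the \emph{exactness} claim, which I would establish by exhibiting two maps and showing they are mutually inverse. In one direction, evaluation sends a closed typed term $t$ to the pair $(v,b)$ where $t \eval v \triangleright k$; this is well-defined because Theorem~\ref{thm:soundness} guarantees such a $v$ exists and Lemma~\ref{lem:determinism} guarantees it is unique, and semantic soundness places $(v,b)$ in $\llbracket A \rrbracket(b) \subseteq \llbracket A \rrbracket(r)$. In the other direction, $\mathsf{reify}_A$ sends a presheaf element $(v,b)$ to the value $v$, which by Theorem~\ref{thm:reification}(3) satisfies $v \eval v \triangleright \bot$, so that values are fixed points of evaluation. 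The key observation is that reification is the identity on the value component, whence the representative of a definable value is the value itself, and uniqueness of this representative is exactly determinism of evaluation.

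To close the argument I would verify the round trip. Starting from $(v,b) \in \llbracket A \rrbracket(r)$, reifying gives $v$ with $\emptyset \vdash_{r;\,b} v : A$ and $v \eval v \triangleright \bot$; feeding this derivation and evaluation back through Theorem~\ref{thm:semantic-soundness} returns $(v,b) \in \llbracket A \rrbracket(b) \subseteq \llbracket A \rrbracket(r)$, recovering the original element. Conversely, a closed term $t$ with $\emptyset \vdash_{r;\,b} t : A$ is represented by the unique value $v$ it evaluates to, and $\mathsf{reify}_A(v,b) = v$ is that same canonical form. I expect no genuine obstacle here, since both halves are immediate consequences of the cited theorems; the only point requiring care is stating precisely what ``unique syntactic representative'' means---namely that the canonical form is the value $v$ itself rather than a further normal form, a fact that rests on values being evaluation fixed points (rule (Val)) together with determinism (Lemma~\ref{lem:determinism}).
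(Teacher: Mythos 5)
Your proposal is correct, and for the two bulleted claims it follows exactly the paper's route: the paper offers no separate proof of this corollary, treating it as an immediate assembly of Theorem~\ref{thm:semantic-soundness} (part (2) gives the soundness bullet) and Theorem~\ref{thm:reification} (parts (1) and (2) give the canonical-forms bullet), precisely as your first paragraph does. Where you genuinely diverge is in the treatment of the ``exactness'' sentence: the paper leaves it as an informal gloss, while you upgrade it to a precise statement by exhibiting the evaluation map $t \mapsto (v,b)$ (well-defined by Theorem~\ref{thm:soundness} for existence and Lemma~\ref{lem:determinism} for uniqueness) and the reification map $(v,b) \mapsto v$, then checking both round trips. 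This is a worthwhile strengthening, and your verification is sound --- in particular you correctly note that the term-side composite is evaluation rather than the identity on terms, so the correspondence lives at the level of values. Two small points of care: first, in the presheaf-side round trip you need $b \preceq r$ to place $(v,b)$ back into $\llbracket A \rrbracket(r)$, which indeed holds because membership in $\llbracket A \rrbracket(r)$ builds in $b \preceq r$ by Definition~\ref{def:type-interpretation}; second, your remark that uniqueness of the syntactic representative ``is exactly determinism of evaluation'' slightly misattributes the source of uniqueness --- the representative of $(v,b)$ is unique simply because $\mathsf{reify}_A$ is a function returning the value component the element already carries, whereas determinism is what makes the \emph{other} direction (evaluation) well-defined. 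Neither point damages the argument.
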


\subsection{Universal Property: Initiality}
\label{sec:universal-property}

We establish that the syntactic model is \emph{initial} in the category of resource-bounded models: it embeds uniquely into all such models.

\begin{definition}[Resource-bounded model]
\label{def:rb-model}
A \emph{resource-bounded model} $\mathcal{M}$ consists of:
\begin{enumerate}
  \item A category $\mathcal{C}$ with finite products
  \item An internal lattice $L_{\mathcal{M}} \in \mathcal{C}$ with morphisms $\oplus_{\mathcal{M}}, \sqcup_{\mathcal{M}} : L_{\mathcal{M}} \times L_{\mathcal{M}} \to L_{\mathcal{M}}$ and $\bot_{\mathcal{M}} : 1 \to L_{\mathcal{M}}$
  \item For each type $A$, an object $\llbracket A \rrbracket_{\mathcal{M}} \in \mathcal{C}$
  \item For each type $A$, a cost morphism $\mathsf{cost}_{\mathcal{M}} : \llbracket A \rrbracket_{\mathcal{M}} \to L_{\mathcal{M}}$
  \item Definable CCC structure: eval and curry morphisms on the chosen type-objects (not necessarily the ambient exponentials)
  \item $\Box_r$ subobjects satisfying counit ($\llbracket \Box_r A \rrbracket_{\mathcal{M}} \to \llbracket A \rrbracket_{\mathcal{M}}$) and monotonicity ($r_1 \preceq r_2$ implies $\llbracket \Box_{r_1} A \rrbracket_{\mathcal{M}} \to \llbracket \Box_{r_2} A \rrbracket_{\mathcal{M}}$)
\end{enumerate}
\end{definition}

The key point: we require CCC structure on the \emph{chosen} type-objects, not that $\llbracket A \to B \rrbracket_{\mathcal{M}}$ be the full exponential in $\mathcal{C}$. This allows models with definable, cost-bounded functions.

\begin{example}
The syntactic model $\mathbf{Set}^L$ is a resource-bounded model with $\mathcal{C} = \mathbf{Set}^L$, using the constructions of \S\ref{sec:semantics}.
\end{example}

\begin{example}
Any set-theoretic interpretation where types are sets with cost functions, products are cartesian, functions are cost-bounded maps with curry/eval, and box is defined via cost constraints forms a resource-bounded model with $\mathcal{C} = \mathbf{Set}$.
\end{example}

\begin{construction}[Interpretation functor]
\label{const:interpretation}
Given a resource-bounded model $\mathcal{M}$, define $\llbracket - \rrbracket_{\mathcal{M}}$ by mutual recursion:

\paragraph{Types.} Given by Definition~\ref{def:rb-model}.

\paragraph{Contexts.} $\llbracket x_1{:}A_1, \ldots, x_n{:}A_n \rrbracket_{\mathcal{M}} = \llbracket A_1 \rrbracket_{\mathcal{M}} \times \cdots \times \llbracket A_n \rrbracket_{\mathcal{M}}$

\paragraph{Terms.} For $\Gamma \vdash_{r;b} t : A$, define $\llbracket t \rrbracket_{\mathcal{M}} : \llbracket \Gamma \rrbracket_{\mathcal{M}} \to \llbracket A \rrbracket_{\mathcal{M}}$ by induction:
\begin{itemize}
  \item $\llbracket x_i \rrbracket_{\mathcal{M}} = \pi_i$ (projection)
  \item $\llbracket \lambda x.t \rrbracket_{\mathcal{M}} = \mathrm{curry}(\llbracket t \rrbracket_{\mathcal{M}})$
  \item $\llbracket f\,a \rrbracket_{\mathcal{M}} = \mathrm{eval} \circ \langle \llbracket f \rrbracket_{\mathcal{M}}, \llbracket a \rrbracket_{\mathcal{M}} \rangle$
  \item $\llbracket (s, t) \rrbracket_{\mathcal{M}} = \langle \llbracket s \rrbracket_{\mathcal{M}}, \llbracket t \rrbracket_{\mathcal{M}} \rangle$
  \item $\llbracket \pi_i\,t \rrbracket_{\mathcal{M}} = \pi_i \circ \llbracket t \rrbracket_{\mathcal{M}}$
  \item $\llbracket \mathrm{box}_r(t) \rrbracket_{\mathcal{M}}$ = inclusion into $\llbracket \Box_r A \rrbracket_{\mathcal{M}}$ (using $b \preceq r$)
  \item $\llbracket \mathrm{unbox}(t) \rrbracket_{\mathcal{M}}$ = counit applied to $\llbracket t \rrbracket_{\mathcal{M}}$
\end{itemize}
\end{construction}

\begin{lemma}[Well-definedness]
\label{lem:interp-well-defined}
The interpretation $\llbracket - \rrbracket_{\mathcal{M}}$ is well-defined: if two derivations prove the same judgment, their interpretations are equal.
\end{lemma}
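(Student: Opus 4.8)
The plan is to prove derivation-independence by induction on typing derivations, after first pinning down exactly where two derivations of a single judgment can diverge. I read Construction~\ref{const:interpretation} as a recursion on derivations in which every rule except (Monotone) is syntax-directed by the outermost term former, and in which a (Monotone) step from $\modal{s_1}A$ to $\modal{s_2}A$ contributes post-composition with the model's monotonicity morphism $\mathsf{mono}^A_{s_1 \preceq s_2} : \llbracket \modal{s_1}A \rrbracket_{\mathcal M} \to \llbracket \modal{s_2}A \rrbracket_{\mathcal M}$. The goal is to show the resulting composite depends only on the judgment, not on the derivation tree.

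First I would establish a \emph{bound-rigidity} sublemma: in a fixed context, the synthesized bound $b$ and the type-skeleton of $t$ (the type with all box grades erased) are functions of the term $t$ alone. This is a routine structural induction on $t$---each term former is introduced by a unique rule, and its bound is computed deterministically from the subterm bounds via the fixed operations $\oplus$, $\sqcup$ and the constants $\delta_\bullet$, while (Monotone) introduces no former and leaves both the term and the bound unchanged, altering only a box grade. Consequently the \emph{only} freedom in deriving a fixed judgment $\ctx \vdash_{r;\,b} t : A$ is where and how often (Monotone) is applied, i.e.\ the choice of intermediate grades on the box subobjects appearing throughout the derivation.

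With rigidity in hand I would run the main induction on one derivation $\mathcal D_1$, inverting the other derivation $\mathcal D_2$ of the same judgment. For a syntax-directed last rule, rigidity forces $\mathcal D_2$ (after collapsing trailing (Monotone) steps) to apply the same rule to subderivations proving the same subjudgments up to box grades; the induction hypothesis identifies the subinterpretations, and since the outer clause ($\pi_i$, $\mathrm{curry}$, $\mathrm{eval} \circ \langle -,- \rangle$, the pairing, the box inclusion, or the counit) is a fixed operation of $\mathcal M$, the two composites coincide. For a (Monotone) last step I would invoke functoriality of the $\mathsf{mono}$ family---$\mathsf{mono}_{s \preceq s} = \mathrm{id}$ and $\mathsf{mono}_{s_2 \preceq s_3} \circ \mathsf{mono}_{s_1 \preceq s_2} = \mathsf{mono}_{s_1 \preceq s_3}$---to collapse any chain of grade-raisings between two fixed endpoints to the single morphism determined by those endpoints, reducing to the subderivation case.

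The main obstacle is organizing the coherence that renders ``up to box grades'' harmless. Because distinct derivations may assign different intermediate grades not only at the top but inside argument types---e.g.\ the domain $A$ of an application may be typed at different box grades in $\mathcal D_1$ and $\mathcal D_2$, so the argument subinterpretations land in \emph{different} objects---bound-rigidity alone does not close the induction: I also need the structural morphisms of the model to be \emph{natural} with respect to $\mathsf{mono}$, namely that $\mathrm{eval}$, $\mathrm{curry}$, the product pairing, and the box inclusion each commute with grade-raising coercions, together with the counit compatibility $\mathrm{counit}_{s_1} = \mathrm{counit}_{s_2} \circ \mathsf{mono}_{s_1 \preceq s_2}$. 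I would make these functoriality and naturality conditions explicit requirements on a resource-bounded model (a mild strengthening of Definition~\ref{def:rb-model}); they are precisely the coherence laws one expects of the $\modal{(-)}$ indexing, and they hold on the nose in the syntactic model $\mathbf{Set}^L$, where every $\mathsf{mono}$ is a subset inclusion and all the relevant squares commute trivially. Granting them, the induction goes through and the interpretation is well-defined.
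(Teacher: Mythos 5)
Your proposal is correct in substance and is considerably more careful than the paper's own argument, though the overall skeleton (induction on derivations, uniqueness from the categorical operations) is the same. The paper disposes of the lemma in two sentences: induction on typing derivations plus the remark that the clauses are ``determined by the categorical structure (products, CCC, subobjects), which gives unique morphisms.'' That argument implicitly treats every rule as syntax-directed and never mentions (Monotone), which is precisely the rule that makes the lemma non-trivial; it likewise assumes silently the coherence you make explicit. Your two additions---the bound-rigidity sublemma showing that two derivations of one judgment can differ only in the placement of (Monotone) steps, hence only in box grades appearing inside the derivation, and the functoriality/naturality laws for the coercion family $\mathsf{mono}_{s_1 \preceq s_2}$ (identity, composition, and compatibility of $\mathrm{eval}$, $\mathrm{curry}$, pairing, box inclusion, and counit with grade coercions)---are exactly what is needed to close the induction, and you are right that Definition~\ref{def:rb-model} as stated does not supply them: it posits monotonicity morphisms but imposes no equations among them, so for a model with an incoherent choice of these morphisms the lemma as literally stated could fail. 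The cost of your route is a mild strengthening of the notion of resource-bounded model, which is harmless since the syntactic model satisfies the laws on the nose (all coercions are subset inclusions), and the benefit is an actual proof where the paper has an assertion. One small point to keep straight when writing it up: the rigidity claim should be phrased so that the synthesized bound depends only on the term (bound synthesis never consults the types in the context), which your structural induction already delivers.
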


\begin{proof}
By induction on typing derivations. The interpretation is determined by the categorical structure (products, CCC, subobjects) in Definition~\ref{def:rb-model}, which gives unique morphisms.
\end{proof}

\begin{theorem}[Type preservation]
\label{thm:type-preservation}
For any $\Gamma \vdash_{r;b} t : A$, $\llbracket t \rrbracket_{\mathcal{M}} : \llbracket \Gamma \rrbracket_{\mathcal{M}} \to \llbracket A \rrbracket_{\mathcal{M}}$ is a valid morphism.
\end{theorem}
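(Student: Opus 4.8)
The plan is to prove the statement by induction on the derivation of $\Gamma \vdash_{r;b} t : A$, showing in each case that the clause assigned by Construction~\ref{const:interpretation} produces a morphism of $\mathcal{C}$ whose source is $\llbracket \Gamma \rrbracket_{\mathcal{M}}$ and whose target is $\llbracket A \rrbracket_{\mathcal{M}}$. The induction hypothesis supplies, for each premise $\Gamma' \vdash_{r;b'} t' : A'$, a valid morphism $\llbracket t' \rrbracket_{\mathcal{M}} : \llbracket \Gamma' \rrbracket_{\mathcal{M}} \to \llbracket A' \rrbracket_{\mathcal{M}}$, so in each case the task reduces to checking that the categorical operation named in the construction has the correct domain and codomain.

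The product and function cases are routine and follow directly from the structure required in Definition~\ref{def:rb-model}. For (Var), $\pi_i$ is a legitimate projection out of the finite product $\llbracket \Gamma \rrbracket_{\mathcal{M}}$. For (Pair) and the projections, the pairing $\langle \llbracket a \rrbracket_{\mathcal{M}}, \llbracket b \rrbracket_{\mathcal{M}} \rangle$ and its post-composition with $\pi_i$ are typed by the universal property of products. For (Lam), the induction hypothesis supplies $\llbracket t \rrbracket_{\mathcal{M}} : \llbracket \Gamma \rrbracket_{\mathcal{M}} \times \llbracket A \rrbracket_{\mathcal{M}} \to \llbracket B \rrbracket_{\mathcal{M}}$, and the definable curry sends it to $\llbracket \Gamma \rrbracket_{\mathcal{M}} \to \llbracket A \to B \rrbracket_{\mathcal{M}}$; for (App), the induction hypothesis pairs $\llbracket f \rrbracket_{\mathcal{M}}$ with $\llbracket a \rrbracket_{\mathcal{M}}$ and composes with the definable eval to land in $\llbracket B \rrbracket_{\mathcal{M}}$. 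The (If) case, although not displayed explicitly in the construction, is handled the same way once the model is equipped with the evident conditional morphism out of $\llbracket \mathsf{Bool} \rrbracket_{\mathcal{M}}$.

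The modal cases are where the content lies. For (Unbox) the counit $\llbracket \Box_s A \rrbracket_{\mathcal{M}} \to \llbracket A \rrbracket_{\mathcal{M}}$ postcomposes with $\llbracket t \rrbracket_{\mathcal{M}}$, and for (Monotone) the monotonicity morphism $\llbracket \Box_{s_1} A \rrbracket_{\mathcal{M}} \to \llbracket \Box_{s_2} A \rrbracket_{\mathcal{M}}$ does likewise; both are supplied directly by Definition~\ref{def:rb-model}(6) and present no difficulty beyond composition. The genuine obstacle is (Box): the construction assigns an inclusion into the subobject $\llbracket \Box_s A \rrbracket_{\mathcal{M}} \hookrightarrow \llbracket A \rrbracket_{\mathcal{M}}$, so to know this is a valid morphism one must verify that $\llbracket t \rrbracket_{\mathcal{M}}$ actually factors through that subobject. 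Unlike the counit, this factorization runs against the direction of the monic and is not free; it must be justified by the side condition $b \preceq s$.

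To discharge (Box) I would first establish a cost-coherence lemma alongside the main induction: the composite $\mathsf{cost}_{\mathcal{M}} \circ \llbracket t \rrbracket_{\mathcal{M}}$ equals the constant morphism $\llbracket \Gamma \rrbracket_{\mathcal{M}} \to L_{\mathcal{M}}$ picking out the synthesized bound $b$, built from $\bot_{\mathcal{M}}, \oplus_{\mathcal{M}}, \sqcup_{\mathcal{M}}$ exactly as $b$ is assembled in the typing rules. Reading $\llbracket \Box_s A \rrbracket_{\mathcal{M}}$ as the subobject of $\llbracket A \rrbracket_{\mathcal{M}}$ cut out by the internal constraint $\mathsf{cost}_{\mathcal{M}} \preceq s$, the hypothesis $b \preceq s$ then guarantees that $\llbracket t \rrbracket_{\mathcal{M}}$ satisfies the defining predicate, and the universal property of the subobject yields a unique factorization $\llbracket \Gamma \rrbracket_{\mathcal{M}} \to \llbracket \Box_s A \rrbracket_{\mathcal{M}}$, which is precisely the inclusion named in the construction. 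The hard part is thus not the bookkeeping of domains and codomains but pinning down, at the level of an abstract resource-bounded model, the semantic meaning of the $\Box_s$ subobject precisely enough that ``$b \preceq s$'' becomes a factorization condition; one must either strengthen Definition~\ref{def:rb-model} so that $\llbracket \Box_s A \rrbracket_{\mathcal{M}}$ is the equalizer expressing $\mathsf{cost}_{\mathcal{M}} \preceq s$, or carry the cost-coherence lemma as an explicit invariant of the interpretation so that the promotion morphism is available wherever (Box) is applied.
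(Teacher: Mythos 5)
Your proposal is correct and, at the level of strategy, matches the paper's proof, which disposes of the theorem in one line: ``by construction of $\llbracket - \rrbracket_{\mathcal{M}}$, each clause produces a morphism of the correct type using the categorical structure.'' The substantive difference is that you actually carry out the case analysis, and in doing so you isolate a point the paper's proof passes over: the (Box) clause is not justified by Definition~\ref{def:rb-model} as stated. That definition only supplies morphisms \emph{out of} $\llbracket \Box_s A \rrbracket_{\mathcal{M}}$ (the counit) and between boxes (monotonicity); it provides no means of factoring $\llbracket t \rrbracket_{\mathcal{M}}$ \emph{through} the monic $\llbracket \Box_s A \rrbracket_{\mathcal{M}} \to \llbracket A \rrbracket_{\mathcal{M}}$, and the syntactic side condition $b \preceq s$ has no semantic force until the subobject is tied to $\mathsf{cost}_{\mathcal{M}}$. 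Your fix --- proving a cost-coherence lemma (that $\mathsf{cost}_{\mathcal{M}} \circ \llbracket t \rrbracket_{\mathcal{M}}$ is the constant built from $\bot_{\mathcal{M}}, \oplus_{\mathcal{M}}, \sqcup_{\mathcal{M}}$ exactly as $b$ is assembled) simultaneously with the main induction, and requiring $\llbracket \Box_s A \rrbracket_{\mathcal{M}}$ to be the subobject cut out by the internal condition $\mathsf{cost}_{\mathcal{M}} \preceq s$ --- is precisely what makes the (Box) clause of Construction~\ref{const:interpretation} well-defined, and it mirrors what the paper's own syntactic model does concretely, where $\llbracket \Box_s A \rrbracket(r)$ is literally carved out by the constraint $b \preceq s$. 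What your route buys is a theorem that genuinely holds for every model satisfying the (strengthened) definition; what the paper's route buys is brevity, at the cost of an interpretation that is not well-defined on (Box) for an arbitrary model meeting only Definition~\ref{def:rb-model}. Your remark that the (If) case is absent from the construction and needs an explicit conditional morphism on $\llbracket \mathsf{Bool} \rrbracket_{\mathcal{M}}$ identifies a further real omission.
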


\begin{proof}
By construction of $\llbracket - \rrbracket_{\mathcal{M}}$: each clause produces a morphism of the correct type using the categorical structure.
\end{proof}

\begin{theorem}[Cost preservation]
\label{thm:cost-preservation}
For closed $\emptyset \vdash_{r;b} t : A$ with $t \eval v \triangleright k$ and $k \preceq b$:
\[
\mathsf{cost}_{\mathcal{M}}(\llbracket t \rrbracket_{\mathcal{M}}) \preceq \llbracket b \rrbracket_{\mathcal{M}}
\]
in the internal order of $L_{\mathcal{M}}$.
\end{theorem}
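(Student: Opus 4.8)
The plan is to prove this by structural induction on the typing derivation of $\emptyset \vdash_{r;b} t : A$, establishing it as the model-internal counterpart of the syntactic cost soundness theorem (Theorem~\ref{thm:soundness}). Since $t$ is closed, $\llbracket \emptyset \rrbracket_{\mathcal{M}} = 1$ and $\llbracket t \rrbracket_{\mathcal{M}} : 1 \to \llbracket A \rrbracket_{\mathcal{M}}$, so both $\mathsf{cost}_{\mathcal{M}} \circ \llbracket t \rrbracket_{\mathcal{M}}$ and $\llbracket b \rrbracket_{\mathcal{M}}$ are global elements $1 \to L_{\mathcal{M}}$, and the claim reduces to a single internal inequality between them, where $\preceq$ is read off the join-semilattice structure of $L_{\mathcal{M}}$ (for instance as $x \preceq y \iff x \sqcup_{\mathcal{M}} y = y$). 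Here $\llbracket b \rrbracket_{\mathcal{M}}$ is defined by recursion on the lattice expression $b$, interpreting $\bot, \oplus, \sqcup$ by $\bot_{\mathcal{M}}, \oplus_{\mathcal{M}}, \sqcup_{\mathcal{M}}$ and each per-operation constant $\delta$ by its global element $\delta_{\mathcal{M}} : 1 \to L_{\mathcal{M}}$. The operational data $t \eval v \triangleright k$ with $k \preceq b$ supplied by the hypothesis (itself from Theorem~\ref{thm:soundness}) serves only to anchor the induction; the substantive content is the static bound on the compositionally computed cost.

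First I would dispatch the base case (Lam): $\llbracket \lambda x.s \rrbracket_{\mathcal{M}} = \mathrm{curry}(\llbracket s \rrbracket_{\mathcal{M}})$ carries, by the model's cost law for curry, a cost equal to the latent body bound, which is exactly $b$. For the inductive cases I would mirror Theorem~\ref{thm:soundness} clause by clause, using at each node the cost-coherence law attached to the corresponding categorical operation: for (Pair), $\mathsf{cost}_{\mathcal{M}} \circ \langle \llbracket t_1 \rrbracket_{\mathcal{M}}, \llbracket t_2 \rrbracket_{\mathcal{M}} \rangle \preceq \oplus_{\mathcal{M}} \circ \langle \mathsf{cost}_{\mathcal{M}} \circ \llbracket t_1 \rrbracket_{\mathcal{M}}, \mathsf{cost}_{\mathcal{M}} \circ \llbracket t_2 \rrbracket_{\mathcal{M}} \rangle$, then the IH on each factor and monotonicity of $\oplus_{\mathcal{M}}$; for (If), the cost law for the conditional splits the two branches through $\sqcup_{\mathcal{M}}$ and adds $\delta_{\mathrm{if},\mathcal{M}}$, matching the worst-case bound $b_c \oplus (b_t \sqcup b_f) \oplus \delta_{\mathrm{if}}$; for (App), the eval law gives a cost bounded by $\oplus_{\mathcal{M}}$ of the function cost, the argument cost, $\delta_{\mathrm{app},\mathcal{M}}$, and the body cost, and one closes as in the syntactic App case by noting the body cost is pre-paid inside $b_f$; for (Box)/(Unbox) I would use that the subobject inclusion into $\llbracket \modal{s} A \rrbracket_{\mathcal{M}}$ preserves cost and that the counit adds only $\delta_{\mathrm{unbox},\mathcal{M}}$, with the grade constraint $b \preceq s$ carried along by (Monotone).

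The main obstacle is that Definition~\ref{def:rb-model} posits only the existence of the cost morphisms $\mathsf{cost}_{\mathcal{M}}$ and the lattice maps, but does not list the coherence laws relating $\mathsf{cost}_{\mathcal{M}}$ to pairing, eval/curry, and the box subobjects --- yet these are precisely what the induction consumes. I would therefore make explicit that a resource-bounded model is required to satisfy exactly one such cost law per typing rule (the semantic image of Figure~\ref{fig:rules}), and verify that these laws hold in the syntactic model of \S\ref{sec:semantics}, where they reduce to Definition~\ref{def:cost-natural} together with the constructor definitions for products, functions, and $\modal{s}$. Within that strengthened definition the App case remains the delicate point: as in Theorem~\ref{thm:soundness}, latent cost abstraction means the function's bound already accounts for its body, so the semantic bookkeeping must show the eval-produced body cost is dominated by the function-object cost rather than double-counted. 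This step relies on the Lam cost law being an equality of latent bounds and on $\oplus_{\mathcal{M}}$ being internally monotone, which I would record as part of the internal-lattice assumptions.
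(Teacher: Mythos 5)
Your proposal follows the same skeleton as the paper's proof---induction on the typing derivation, mirroring Theorem~\ref{thm:soundness} clause by clause, with one cost law consumed per typing rule; indeed the paper's proof states exactly your (App) and (Box) inequalities as its only worked cases---but your version is substantially more careful about what makes that induction legitimate. The paper's proof simply asserts that ``the interpretation composes costs according to typing rules'' and that the internal lattice operations of $\mathcal{M}$ satisfy the same axioms as $L$; it never observes that Definition~\ref{def:rb-model} contains no axiom whatsoever relating $\mathsf{cost}_{\mathcal{M}}$ to pairing, curry/eval, or the box subobjects. Your diagnosis that these coherence laws are missing and must be added to the definition of resource-bounded model (and then verified for the syntactic model, where they reduce to Definition~\ref{def:cost-natural} together with the constructor clauses of \S\ref{sec:semantics}) is exactly right, and it repairs a genuine gap in the paper's own argument rather than merely replicating it. The same holds for $\llbracket b \rrbracket_{\mathcal{M}}$: the statement uses this notation, but neither Definition~\ref{def:rb-model} nor Construction~\ref{const:interpretation} provides any map interpreting lattice elements in $L_{\mathcal{M}}$, so your recursive interpretation of bound expressions over $\bot$, $\oplus$, $\sqcup$ and the $\delta$ constants is a needed supplement, not an optional refinement. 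You are also correct that the operational hypothesis $t \eval v \triangleright k$, $k \preceq b$ is essentially inert (it is itself a consequence of Theorem~\ref{thm:soundness}), and that the delicate point is the (App) case, where latent cost abstraction requires the eval cost law to avoid double-counting the body cost already folded into the function's bound. In short: same route as the paper, but yours is the version that would actually close; the paper's proof is a sketch that tacitly presupposes the strengthened definition you spell out.
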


\begin{proof}
By induction on the typing derivation, using:
\begin{itemize}
  \item Cost soundness (Theorem~\ref{thm:soundness}): $k \preceq b$
  \item The interpretation composes costs according to typing rules
  \item Internal lattice operations in $\mathcal{M}$ satisfy the same axioms as $L$
\end{itemize}

For (App): $\mathsf{cost}(\llbracket f\,a \rrbracket_{\mathcal{M}}) \preceq \mathsf{cost}(\llbracket f \rrbracket_{\mathcal{M}}) \oplus_{\mathcal{M}} \mathsf{cost}(\llbracket a \rrbracket_{\mathcal{M}}) \oplus_{\mathcal{M}} \delta_{\mathrm{app}}$, matching the typing rule.

For (Box): by the subobject definition, $\llbracket \mathrm{box}_s(t) \rrbracket_{\mathcal{M}}$ has cost $\preceq s$.
\end{proof}

\begin{theorem}[Uniqueness]
\label{thm:interpretation-unique}
Any two interpretations $\llbracket - \rrbracket^{(1)}_{\mathcal{M}}$ and $\llbracket - \rrbracket^{(2)}_{\mathcal{M}}$ agreeing on base types agree on all types and terms.
\end{theorem}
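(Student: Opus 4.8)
The plan is a rigidity argument: I will show that coincidence on base types propagates through every clause of Construction~\ref{const:interpretation}, because each compound type and each term is assembled from its constituents by a single canonical piece of the fixed model's structure. Since the term clauses mention the interpreting objects, I will treat types first and terms second. Throughout I may appeal to Lemma~\ref{lem:interp-well-defined}, so that the term induction can proceed on typing derivations without sensitivity to the choice of derivation, and I note at the outset that the internal lattice $L_{\mathcal{M}}$, its operations $\oplus_{\mathcal{M}}, \sqcup_{\mathcal{M}}, \bot_{\mathcal{M}}$, and the $\delta$-constants belong to the single shared model $\mathcal{M}$ and are therefore identical across the two interpretations.

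\emph{Stage 1 (types).} I proceed by structural induction on the type $A$, establishing $\llbracket A \rrbracket^{(1)}_{\mathcal{M}} = \llbracket A \rrbracket^{(2)}_{\mathcal{M}}$ together with the associated cost morphism and structural maps. Base types agree by hypothesis. For $A \times B$, both interpretations are forced to be the chosen product of $\llbracket A \rrbracket$ and $\llbracket B \rrbracket$ in $\mathcal{C}$; the induction hypothesis equates the factors, and applying the same chosen product to equal objects yields equal products, with equal projections and an equal combined cost map built from the shared operation $\oplus_{\mathcal{M}}$. For $A \to B$, both must be the chosen definable exponential assembled from $\llbracket A \rrbracket$ and $\llbracket B \rrbracket$, so the induction hypothesis and the fixity of the CCC data give agreement. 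For $\Box_s A$, both are the chosen box subobject of $\llbracket A \rrbracket$ at grade $s$; this object, its counit, and its monotonicity maps are determined by $\llbracket A \rrbracket$ (equal by hypothesis) and by $s$, hence coincide.

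\emph{Stage 2 (terms).} With all type-objects identified, I induct on the derivation of $\Gamma \vdash_{r;b} t : A$. Stage 1 makes the source $\llbracket \Gamma \rrbracket$ and the target $\llbracket A \rrbracket$ identical in the two interpretations, so only the morphisms remain to be compared. Each clause of Construction~\ref{const:interpretation} produces $\llbracket t \rrbracket$ from the interpretations of the immediate subterms by one fixed operation of $\mathcal{M}$: a projection for (Var); $\mathrm{curry}$ for (Lam); $\mathrm{eval}$ postcomposed with a pairing for (App); a pairing for (Pair); postcomposition with a projection for the projections; the subobject inclusion for (Box); the counit for (Unbox); and the fixed monotonicity map for a (Monotone) retyping. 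In every case the induction hypothesis equates the subterm interpretations, and the governing categorical operation is the same chosen data of $\mathcal{M}$, so the composites agree.

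\emph{Main obstacle.} The one case needing genuine care is the function type together with (Lam)/(App), because Definition~\ref{def:rb-model} supplies only \emph{definable} CCC structure---chosen $\mathrm{eval}$ and $\mathrm{curry}$ on the selected objects---rather than the ambient exponential of $\mathcal{C}$. I will close it along either of two equivalent readings. If the model presents $\mathrm{eval}$ and $\mathrm{curry}$ as chosen morphisms, then both interpretations invoke the identical data and agreement is immediate from the induction hypothesis. If instead the definable CCC structure is taken through its universal property, then $\mathrm{eval}$ is the counit of the definable adjunction and $\mathrm{curry}(\llbracket t \rrbracket)$ is the unique transpose of $\llbracket t \rrbracket$, so the uniqueness clause of that universal property forces coincidence once the two interpretations agree on $\llbracket t \rrbracket$. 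I would present the chosen-data reading as primary and record the universal-property reading as equivalent; the product and box cases are then analogous but strictly easier, since chosen products and chosen subobjects carry their own uniqueness of mediating maps and inclusions.
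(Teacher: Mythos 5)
Your proposal is correct and follows essentially the same route as the paper's own proof: a structural induction on types followed by an induction on typing derivations, with each clause forced by the fixed categorical data of $\mathcal{M}$ (chosen products, curry/eval, box subobjects and counit). If anything, you are more careful than the paper, which appeals to products being ``unique up to isomorphism'' where your chosen-data reading correctly delivers on-the-nose equality, and which omits the (Monotone) retyping case that you handle explicitly.
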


\begin{proof}
By induction on structure:

\paragraph{Types.}
\begin{itemize}
  \item Base types: by assumption
  \item Products: forced by categorical product (unique up to isomorphism)
  \item Functions: forced by CCC structure (curry/eval are unique)
  \item Box: forced by subobject definition
\end{itemize}

\paragraph{Terms.} Each typing rule corresponds to a unique categorical operation:
\begin{itemize}
  \item Variables $\mapsto$ projections (unique)
  \item Lambda $\mapsto$ curry (unique by CCC universal property)
  \item Application $\mapsto$ eval (unique by CCC universal property)
  \item Pairing $\mapsto$ product introduction (unique by product universal property)
  \item Box/Unbox $\mapsto$ subobject inclusion/counit (unique)
\end{itemize}

Therefore any two interpretations must coincide.
\end{proof}

\begin{theorem}[Initiality]
\label{thm:universal-property}
The syntactic model $\mathbf{Syn} = \mathbf{Set}^L$ is initial in the category of resource-bounded models: for any model $\mathcal{M}$, there exists a unique structure-preserving morphism $\llbracket - \rrbracket_{\mathcal{M}} : \mathbf{Syn} \to \mathcal{M}$.
\end{theorem}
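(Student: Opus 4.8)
The plan is to assemble the pieces already established in \S\ref{sec:universal-property} into the two halves of an initiality statement---existence of the interpretation morphism and its uniqueness---after first fixing what a \emph{morphism} of resource-bounded models is. I would take such a morphism to be a structure-preserving functor respecting all the data of \cref{def:rb-model}: it carries the internal lattice $L_{\mathbf{Syn}}$ to $L_{\mathcal{M}}$ compatibly with $\oplus$, $\sqcup$, $\bot$; sends each chosen type-object $\llbracket A\rrbracket$ to $\llbracket A\rrbracket_{\mathcal{M}}$; commutes with the cost morphisms; preserves the finite-product and definable-CCC structure ($\mathrm{eval}/\mathrm{curry}$); and preserves the $\modal{r}$ subobjects together with their counit and monotonicity maps. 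With this notion in hand, initiality is precisely the statement that the interpretation of \cref{const:interpretation} is the unique such morphism.

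For existence, I would exhibit $\llbracket-\rrbracket_{\mathcal{M}}$ from \cref{const:interpretation} as the required morphism. Its action on objects is forced by the model data and lands correctly by \cref{thm:type-preservation}; its action on terms is the inductive clause-by-clause definition, which is independent of the chosen derivation by \cref{lem:interp-well-defined}. Preservation of the lattice and product structure is immediate from the clauses for pairing and projection; preservation of the CCC structure follows from the $\mathrm{curry}/\mathrm{eval}$ clauses; and preservation of the box subobjects follows from the inclusion/counit clauses for $\mathrm{box}$ and $\mathrm{unbox}$. Compatibility with cost---that the triangle over $L_{\mathcal{M}}$ commutes in the internal order---is exactly \cref{thm:cost-preservation}. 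Thus $\llbracket-\rrbracket_{\mathcal{M}}$ is a bona fide morphism of resource-bounded models.

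For uniqueness, let $G : \mathbf{Syn}\to\mathcal{M}$ be any structure-preserving morphism. Because the base-type objects are part of the preserved data, $G$ must send $\llbracket\mathsf{Bool}\rrbracket$ and $\llbracket\mathsf{Nat}\rrbracket$ to the corresponding base objects of $\mathcal{M}$, so $G$ and $\llbracket-\rrbracket_{\mathcal{M}}$ agree on base types. \cref{thm:interpretation-unique} then propagates this agreement: products are forced by the product universal property, arrows by the CCC universal property, and boxes by the subobject definition, so the two morphisms coincide on all types; and each term clause is pinned to a unique categorical operation (projection, $\mathrm{curry}$, $\mathrm{eval}$, product introduction, subobject inclusion/counit), so they coincide on all terms. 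Hence $G = \llbracket-\rrbracket_{\mathcal{M}}$.

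The step I expect to be the real obstacle is not the induction---that is routine given \cref{thm:interpretation-unique}---but the delineation of the morphism notion so that ``structure-preserving'' genuinely forces agreement on base types. If one allows morphisms that reinterpret base types freely, uniqueness fails and one obtains only a weaker ``initial-up-to-choice-of-base-interpretation'' statement; the clean initiality claim therefore depends on building the chosen base-type objects and their cost morphisms into the definition of a model morphism, which is the one piece of bookkeeping that must be made explicit for the theorem to hold as stated.
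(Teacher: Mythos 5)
Your proposal is correct and follows essentially the same route as the paper: the paper's proof is precisely the assembly of Construction~\ref{const:interpretation} (existence), Lemma~\ref{lem:interp-well-defined} (well-definedness), Theorems~\ref{thm:type-preservation} and~\ref{thm:cost-preservation} (preservation of types and costs), and Theorem~\ref{thm:interpretation-unique} (uniqueness). Your one addition---explicitly defining what a structure-preserving morphism of resource-bounded models is, and observing that uniqueness hinges on the base-type objects and cost morphisms being part of the preserved data---is a point the paper leaves entirely implicit, and it is in fact needed for the initiality statement to be well-posed.
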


\begin{proof}
\begin{itemize}
  \item \textbf{Existence:} Construction~\ref{const:interpretation}
  \item \textbf{Well-definedness:} Lemma~\ref{lem:interp-well-defined}
  \item \textbf{Preservation:} Theorems~\ref{thm:type-preservation} and~\ref{thm:cost-preservation}
  \item \textbf{Uniqueness:} Theorem~\ref{thm:interpretation-unique}
\end{itemize}
\end{proof}

\begin{corollary}[Embedding into arbitrary models]
For any closed $\emptyset \vdash_{r;b} t : A$ and any model $\mathcal{M}$:
\[
\llbracket t \rrbracket_{\mathcal{M}} \in \llbracket A \rrbracket_{\mathcal{M}} \quad \text{with} \quad \mathsf{cost}_{\mathcal{M}}(\llbracket t \rrbracket_{\mathcal{M}}) \preceq b
\]
Every typed term has a canonical interpretation in every model with costs preserved. The syntactic model is the \emph{free} resource-bounded model.
\end{corollary}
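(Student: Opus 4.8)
The plan is to obtain both assertions by specializing the three results behind the initiality theorem to the closed, empty-context case, so that the corollary reduces to a packaging of \cref{thm:universal-property,thm:type-preservation,thm:cost-preservation}. First I would invoke \cref{thm:universal-property} to fix the unique structure-preserving interpretation $\llbracket-\rrbracket_{\mathcal{M}}$ attached to the chosen model $\mathcal{M}$; every subsequent claim is then read off this single functor, and its uniqueness is precisely what licenses the phrase ``canonical interpretation.''

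For the membership claim I would first observe that the empty context interprets to the terminal object: by the \emph{Contexts} clause of \cref{const:interpretation} an $n$-ary context maps to an $n$-fold product, so $\llbracket\emptyset\rrbracket_{\mathcal{M}}$ is the empty product $1$. \cref{thm:type-preservation} then yields $\llbracket t\rrbracket_{\mathcal{M}} : 1 \to \llbracket A\rrbracket_{\mathcal{M}}$, that is, a global element (point) of $\llbracket A\rrbracket_{\mathcal{M}}$. This is the intended categorical reading of the informal ``$\in$'': in $\mathbf{Set}$ a point of a set is literally an element, while in a general $\mathcal{C}$ it is a morphism out of the terminal object, so the notation is justified uniformly across all models.

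For the cost bound I would supply the evaluation data demanded by \cref{thm:cost-preservation}. Cost soundness (\cref{thm:soundness}) guarantees $v \in \Val$ and $k \in L$ with $t \eval v \triangleright k$ and $k \preceq b$; feeding these into \cref{thm:cost-preservation} gives $\mathsf{cost}_{\mathcal{M}}(\llbracket t\rrbracket_{\mathcal{M}}) \preceq \llbracket b\rrbracket_{\mathcal{M}}$ in the internal order of $L_{\mathcal{M}}$. To match the statement's external bound $b$, I would note that $\llbracket b\rrbracket_{\mathcal{M}}$ is the image of $b$ under the structure maps $\bot_{\mathcal{M}}, \oplus_{\mathcal{M}}, \sqcup_{\mathcal{M}}$ of \cref{def:rb-model}, which satisfy the same lattice axioms as $L$; under this identification the inequality reads exactly as $\mathsf{cost}_{\mathcal{M}}(\llbracket t\rrbracket_{\mathcal{M}}) \preceq b$. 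The final assertion that $\mathbf{Syn}$ is \emph{free} is then just a restatement of initiality: the existence part from \cref{thm:universal-property} together with the uniqueness supplied by \cref{thm:interpretation-unique} exhibits $\mathbf{Syn}$ as the initial, hence free, resource-bounded model.

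The step I would take most care over is notational rather than mathematical: making ``$\in$'' precise as a global element, and reconciling the \emph{internal} cost order of $L_{\mathcal{M}}$ with the \emph{external} lattice element $b \in L$. All genuine computational content already resides in the cited theorems, so the corollary is a bookkeeping consequence; the one place an unremarked gap could hide is the coherence between $b$ and its internalization $\llbracket b\rrbracket_{\mathcal{M}}$, and closing that gap is exactly what the lattice-structure-preservation clause of \cref{def:rb-model} is designed to guarantee.
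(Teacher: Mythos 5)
Your proposal is correct and takes essentially the same route as the paper, which states this corollary without a separate proof as an immediate packaging of \cref{const:interpretation}, \cref{thm:type-preservation}, \cref{thm:cost-preservation}, and \cref{thm:interpretation-unique} via \cref{thm:universal-property} --- exactly the results you cite. Your two points of extra care (reading ``$\in$'' as a global element $1 \to \llbracket A \rrbracket_{\mathcal{M}}$, and reconciling the external bound $b$ with its internalization $\llbracket b \rrbracket_{\mathcal{M}}$) simply make explicit what the paper leaves informal.
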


\subsection{Concrete Instantiations}
\label{sec:instantiations}

\paragraph{Time only.} $L = \Nat$, $\oplus = +$, $\sqcup = \max$. Theorem~\ref{thm:soundness} becomes $k \leq b \leq r$.

\paragraph{Multi-dimensional.} $L = \Nat^3$ (time, memory, depth) with pointwise operations bounds all dimensions simultaneously.

\paragraph{Gas.} $L = \Nat$ interpreted as gas units; only $\delta$ constants change per EVM version.

\section{Recursion Patterns}
\label{sec:recursion-patterns}

The metatheory covers recursion-free terms. For recursive functions, we use Lean's well-founded recursion with separate bound proofs.

\begin{lstlisting}[language=Haskell,caption={Binary search with termination proof}]
def binarySearch (arr : Array Nat) (target : Nat) : Option Nat :=
  if h : arr.isEmpty then none
  else
    let mid := arr.size / 2
    let pivot := arr[mid]
    match target.compare pivot with
    | .eq => some mid
    | .lt => binarySearch (arr.slice 0 mid) target
    | .gt => binarySearch (arr.slice (mid + 1) arr.size) target
  termination_by arr.size
\end{lstlisting}

Bounds are proven separately and composed:

\begin{lstlisting}[language=Haskell]
theorem binarySearch_bound (arr : Array a) (target : a) :
  cost (binarySearch arr target) <= log arr.size + 5 := by
  -- well-founded induction on arr.size
\end{lstlisting}

This approach does not automatically synthesize size-indexed bounds; full dependent types are future work.

\section{Engineering in Lean 4}
\label{sec:engineering}

We discuss two integration approaches. Code is available at \url{https://github.com/CoreyThuro/RB-TT}.

\paragraph{Deep embedding.} Formalize the calculus inside Lean: define syntax, typing judgments, and evaluation as inductive types/relations; prove Theorems~\ref{thm:soundness}--\ref{thm:universal-property} mechanically. This validates correctness but is ongoing work.

\paragraph{Shallow embedding (current).} Write Lean programs with custom \texttt{@[budget\_bound]} annotations; prove bounds as separate theorems; use CI to check structural cost proxies.

\paragraph{The correspondence gap.} The formal bound $b$ from typing rules is abstract; the practical cost model $\mathsf{cost}(t) = \alpha \cdot \mathsf{nodes}(t) + \beta \cdot \mathsf{lamDepth}(t) + \cdots$ measures elaborated terms. These are separate models. The CI system uses practical costs for regression testing without claiming $\mathsf{cost}(\mathsf{elab}(t)) \leq b$. Despite this gap, the combination is useful: developers reason compositionally, CI catches regressions, and empirical data can refine heuristics.

\section{Case Study: Binary Search}
\label{sec:case-study}

We instantiate with $L = \Nat$. Binary search has bound $b = \lceil \log_2 n \rceil + 5$.

\paragraph{Semantic validation.} By Theorem~\ref{thm:semantic-soundness}, $(\texttt{binarySearch}, b) \in \llbracket \mathsf{Array}\,\mathsf{Nat} \to \mathsf{Option}\,\mathsf{Nat} \rrbracket(b)$. By Theorem~\ref{thm:reification}, values are their own reifications. By Theorem~\ref{thm:universal-property}, the interpretation embeds into all models.

\paragraph{Portability.} The same proof structure works for multi-dimensional lattices (time, memory, depth) or gas budgets by changing $\delta$ constants.

\paragraph{Lessons.} Compositional bounds via recursion patterns work; cost soundness is certified; the syntactic model validates composition; practical proxies catch regressions. Caveats: bounds are proven separately; the CI model is empirical; automatic synthesis is future work.

\section{Limitations and Future Work}
\label{sec:future}

\paragraph{What we provide.} Compositional framework for recursion-free STLC; syntactic model in $\mathbf{Set}^L$; cost soundness, canonical forms, and initiality theorems; recursion via separate proofs; structural cost proxies.

\paragraph{Open problems.} Dependent types with size-indexed bounds; formal correspondence to elaboration; probabilistic/average-case bounds; full mechanization; HoTT extensions.

\paragraph{Future directions.} Complete mechanization; extend to dependent types; cost-aware elaborator; tooling for Coq/Agda; empirical validation; graded parametricity; integration with RAML.

\section{Related Work}

\paragraph{Type theory foundations.} Our work builds on the tradition of Martin-L\"{o}f type theory~\cite{martinlof1984,nordstrom1990} and the Curry-Howard correspondence between proofs and programs~\cite{girard1989}. Pierce~\cite{pierce2002} provides a comprehensive treatment of type systems for programming languages, which informs our typing rules.

\paragraph{Categorical semantics and topoi.} The categorical approach to type theory originates with Lambek and Scott~\cite{lambek1986}. Our use of presheaf topoi follows Mac Lane and Moerdijk~\cite{maclane1992}; Johnstone~\cite{johnstone2002} provides the definitive reference for topos theory. Jacobs~\cite{jacobs1999} develops the connection between categorical logic and type theory systematically. Our syntactic model in $\mathbf{Set}^L$ uses standard presheaf constructions with the resource lattice as the indexing category.

\paragraph{Coeffects and graded modalities.} Petricek et al.~\cite{petricek2014coeffects} introduced coeffects for context-dependent computation; Brunel et al.~\cite{brunel2014core} developed graded comonads; Orchard et al.~\cite{orchard2019granule} apply graded types in Granule. Gaboardi et al.~\cite{gaboardi2016combining} combine effects and coeffects via grading. We generalize to abstract lattices, distinguish $\oplus$ from $\sqcup$, and provide a syntactic model with initiality.

\paragraph{Linear logic and bounded complexity.} Girard's linear logic~\cite{girard1987linear} introduced resource-sensitive reasoning. Benton~\cite{benton1995mixed} developed mixed linear/non-linear systems. Hofmann~\cite{hofmann2003linear} and Dal Lago--Hofmann~\cite{dallago2010bounded} use linear types for complexity bounds. We target explicit budgets rather than complexity classes.

\paragraph{Quantitative type theory.} Atkey~\cite{atkey2018syntax} tracks usage multiplicities; McBride~\cite{mcbride2016got} explores quantity-aware programming. We track computational cost via presheaves and natural transformations rather than usage counts.

\paragraph{Automatic resource analysis.} Hoffmann et al.~\cite{hoffmann2017raml,hoffmann2012amortized} infer polynomial bounds via RAML using potential functions. We certify bounds via typing with categorical validation rather than inference.

\paragraph{Separation and indexed models.} O'Hearn et al.~\cite{ohearn2001local} and Reynolds~\cite{reynolds2002separation} reason about heap resources. Appel--McAllester~\cite{appel2001indexed} use step-indexing for recursive types. Our presheaves track certified bounds with initiality rather than step indices.

\paragraph{Proof assistants.} Our engineering integration targets Lean 4~\cite{moura2021lean4}. Coq~\cite{coqteam2023} and other proof assistants provide alternative platforms for mechanization.

\section{Conclusion}

We presented a compositional framework for resource-bounded types using abstract lattices, formalized for recursion-free STLC. Key contributions: (1) cost soundness (Theorem~\ref{thm:soundness}) by structural induction; (2) a syntactic model in $\mathbf{Set}^L$ with cost as a natural transformation (Theorem~\ref{thm:semantic-soundness}); (3) canonical forms (Theorem~\ref{thm:reification}); (4) initiality establishing universal completeness (Theorem~\ref{thm:universal-property}).

The initiality result shows our typing rules correctly capture resource-bounded computation: the syntactic model is the free resource-bounded model. Engineering integration uses structural proxies, acknowledging the gap with elaborated terms. The framework provides a rigorous foundation for reasoning about resource bounds compositionally, with applications in safety-critical systems, blockchain, and embedded devices.

Future work includes mechanization, dependent types with automatic bound synthesis, and integration with existing tools like RAML.

\end{document}